\numberwithin{equation}{section}
\newtheorem{theorem}{Theorem}[section]
\newtheorem{proposition}[theorem]{Proposition}
\newtheorem{lemma}[theorem]{Lemma}
\theoremstyle{remark}
\newtheorem{remark}[theorem]{Remark}
\DeclarePairedDelimiterX \ip[2]{\langle}{\rangle}{#1,#2}
\DeclarePairedDelimiterXPP \Prob[1]{\mathbb{P}}\{\}{}{\newcommand\given{\nonscript\:\delimsize\vert\nonscript\:\mathopen{}} #1} 
\DeclarePairedDelimiterXPP \Probevent[1]{\mathbb{P}}(){}{\newcommand\given{\nonscript\:\delimsize\vert\nonscript\:\mathopen{}}#1} 
\DeclareMathOperator{\softmax}{softmax}
\newcommand{\E}{{\mathbb E}}
\newcommand{\e}{\varepsilon}
\newcommand{\one}{{\mathds{1}}}
\title{LLM Watermarking Using Mixtures and Statistical-to-Computational Gaps}
\begin{document}
\author{
  Pedro Abdalla
  \qquad
  Roman Vershynin\thanks{Department of Mathematics, UC Irvine}
}

\maketitle

\begin{abstract}
Given a text, can we determine whether it was generated by a large language model (LLM) or by a human? A widely studied approach to this problem is watermarking. We propose an undetectable and elementary watermarking scheme in the closed setting. Also, in the harder open setting, where the adversary has access to most of the model, we propose an unremovable watermarking scheme. 
\end{abstract}

\section{Introduction}
Large Language Models (LLMs) have emerged as a powerful technology for generating human-like text \cite{achiam2023gpt,stokel2022ai}. On one side, an LLM performs well if it produces text that closely resembles human writing. On the other side, the use of high-performance LLMs also bring undesirable consequences such as the spread of misinformation \cite{starbird2019disinformation}, misuse in education \cite{stokel2022ai,milano2023large}, and data pollution \cite{radford2023robust,shumailov2023curse}.

In this context, there is an urge to develop methods to distinguish human and AI generated text to mitigate those outcomes. One prominent technique is the so-called watermarking approach in which the goal is to embed a detectable signal in the text generated by the LLM. 

Before describing watermarking in more details, we recall the concept of tokenization. In a nutshell, a word consists in small pieces of ``sub-words" known as tokens. A LLM outputs each token sequentially by computing a probability distribution over a fixed set of possible tokens (dictionary) and sampling the next token from it. The distribution of the next token varies from token to token as it depends on the previous tokens sampled, while the dictionary remains fixed during the process of text generation.

The most common approach to watermark a text is to watermark each token by planting a hidden structure into its probability distribution. In this sense, it is natural to impose some requirements on which properties a good watermarking should have. For example, it is natural to require that the watermarking scheme does not deteriorate the quality of the text or that it cannot be easily removed by someone with malicious intentions, an adversary.  

In what follows, we describe the requirements for our watermarking scheme. To this end, we shall make a distinction between two different settings: The closed setting and the open source setting. We first describe the closed setting. In this case, one would like to generate a watermarking satisfying three requirements
\begin{itemize}
    \item {\bf Statistical Undetectability:} Any  algorithm based solely on the text generated by the LLM fails to detect any change in the probability distributions used to generate the tokens \footnote{It should not be confused with the notion of cryptographic undetectability as in \cite{christ2024undetectable}.}.
    \item {\bf Completeness:} It is possible to detect the watermarked model if the algorithm has access to extra piece of information known as ``secret key".
    \item {\bf Soundness:} Any text generated independently of the secret key has negligible chance of being detected as watermarked.
\end{itemize}
We postpone the mathematical framework of those requirements to the next section. Now, let us provide some intuition behind those requirements. The first requirement is useful to preserve the quality of the text generated and prevents malicious users (adversaries) to manipulate the text to remove the watermarking scheme. The second requirement is the core idea of watermark to distinguish texts generated by AI and humans which clearly requires a ``secret key", otherwise would contradict the ``undetectability" requirement. Finally, the last requirement is of fundamental importance as, for example, it prevents false accusations of AI misuse (see for example \cite{fowler2023we}).

A harder task is to watermark the text in the so-called open source setting. This is motivated by the recent explosion of AI open source models, where the user has access to the model parameters and the associated code \cite{touvron2023open,zhang2022opt,christ2024provably}. Since now the adversary has much more power, we replace the ``undetectability" requirement by the weaker ``unremovability" requirement:
\begin{itemize}
    \item {\bf Unremovability:} Any adversary that does not have knowledge about the secret key cannot remove the watermarking scheme unless it deteriorates the quality of the text.
\end{itemize}
Clearly, one has to impose some conditions on what the adversary knows, otherwise he could train a new model on its own, making watermarking impossible. Besides, the adversary goal is to remove the watermark and use the text for malicious purposes, so the quality of the text cannot be deteriorated too much.

In this work, we allow the adversary to arbitrarily modify the inputs of the LLM and also allow him to have knowledge of each token distribution used for sampling (after the watermarking scheme was planted).

\subsection{Related Work}
Several watermarking schemes were proposed \cite{abdelnabi2021adversarial,munyer2023deeptextmark,yoo2023robust} for the closed setting without any formal guarantee. Perhaps, the first watermarking scheme with provable guarantees is from \cite{kirchenbauer2023watermark}, where the authors proposed to split the dictionary into a green list and red list. The probabilities corresponding to the tokens in the green list are slightly increased while the ones in the red list are slightly decrease. The watermarking can be detected by checking the frequency of tokens in the green list versus tokens in the red list. Therefore, the downside of this approach is that the ``undetectability" requirement is not fulfilled.

Another line of work \cite{christ2024undetectable,li2025statistical,aaronson2023watermarking,kuditipudi2023robust} is dedicated to the following idea: Let $u$ be a random variable distributed uniformly over the interval $[0,1]$ and $p=(p(1),\ldots,p(d))$ be a probability distribution over a dictionary of size $d$. We can sample the next token according to $p$ by sampling from $u$ first and then observing that for any $k\in \{1,\ldots,d\}$
\begin{equation*}
    \mathbb{P}\Bigg\{u\in \Bigg[\sum_{i=1}^{k-1}p(i),\sum_{i=1}^{k}p(i)\Bigg]\Bigg\} = p(k).
\end{equation*}
The watermarking schemes exploit correlation between the tokens and the corresponding (uniform) random variables $u$'s used to sample the tokens. Despite this approach has some guarantees, the major drawback is that the detection algorithm is quite convoluted relying on a complicated optimization because it is hard to capture the planted correlation. In addition to this, to achieve the ``undetectability" requirement for the whole text, the approaches in the literature rely on some cryptographic assumptions.

To the best of our knowledge, the only result for the open source setting is from \cite{christ2024provably}. The authors proposed to perturb each logit in the softmax rule (see equation \ref{eq:softmax_rule}) by a vector sampled from a multivariate Gaussian distribution (see \cite{block2025gaussmark} for a similar idea) and exploits the correlation between the text and such vector. The authors provided some theoretical evidence (partially rigorous) for completeness and unremovability under strong assumptions on the text. \footnote{For example, \cite{christ2024provably} assumes that the token distribution behaves as an uniform distribution over a certain subset and also that the adversary makes changes respecting some normalization of the soft-max function which are hard to verify.}

\subsection{Main Contributions}
Our main contributions are new connections between watermarking and robust statistics to derive more efficient watermarking schemes. Our first main result follows from Proposition \ref{prop:undetectability_closed}, Proposition \ref{prop:soundness_closed} and Theorem \ref{thm:completness_closed}, where we propose an elementary watermarking scheme for the closed setting satisfying all the requirements (undetectability, completeness and soundness) under mild assumptions on the distribution of the text. We also argue that the assumptions are necessary in some sense. 

In a nutshell, our watermarking scheme proceeds as follows: In the first step it randomly constructs partitions of the dictionary set into green and red tokens that change at each time a new token is sampled. Similarly to \cite{kirchenbauer2023watermark}, the probabilities of the green tokens are shifted upwards, and the ones for the red tokens, downwards. However, we make some key changes to this shifting scheme, which allows it to achieve undetectability and obtain explicit non-asymptotic guarantees that are easier to interpret than the more involved guarantees available in the literature.

Our second main result follows from Proposition \ref{prop:unremovability}, Proposition \ref{prop:soundness_open} and Theorem \ref{thm:completeness_open} and lies in the realm of the open source setting. By leveraging novel connections to the theory of statistical-to-computational gaps in robust statistics, we proposed a watermarking scheme that is both ``sound" and ``complete", along with mathematically rigorous guarantees. This watermark is also ``unremovable'': any algorithm that attempts to remove it must (indirectly) solve a computationally hard problem -- the sparse mean estimation under Huber's contamination model (see \cite{huber2011robust} for a comprehensive introduction). 

This version of our algorithm also perturbs the logits by Gaussian vectors, similarly to \cite{christ2024provably}. However, instead of using the same perturbation for each token, we use independent perturbations.

Our Gaussian random perturbations are drawn from a mixture of non-centered Gaussians -- a distribution that is hard to distinguish from a centered Gaussian. Thus, the adversary who attempts to remove it faces impossibility results from the robust statistics literature  borrowed from Brennan and Bresler \cite{brennan2020reducibility}. This is a novel approach to study unremovability in watermarking problems.

\subsection{Roadmap}
The rest of this manuscript is organized as follows. In Section \ref{sec:hypo_formulation}, we formally state the watermarking problem in the framework of hypothesis testing. Section \ref{sec:standard} is dedicated to the main results for watermarking in the closed setting  and Section \ref{sec:open} is dedicated to the main result for watermarking in the open source setting. The appendix is dedicated to the proof of technical results.

\section{The Hypothesis Testing Formulation}
\label{sec:hypo_formulation}
Let $T$ be our dictionary of $d$ tokens, which we can identify with $[d] \coloneqq \{1,\ldots,d\}$ without loss of generality.
A text is a sequence of random variables $x_i$ taking values in $[d]$. At each step, the LLM computes logits $L = (\ell(1),\ldots,\ell(d))$ and samples the next token $x \in [d]$ according to the softmax rule
\begin{equation}
\label{eq:softmax_rule}
    p(i)\coloneqq\Prob{x=i}=\softmax(\ell(1),\ldots,\ell(d))\coloneqq\frac{e^{\ell(i)}}{e^{\ell(1)}+\cdots+e^{\ell(d)}}.
\end{equation}

A watermarking scheme consists of two parts: 
\begin{itemize}
    \item The {\em sampling algorithm} at each step tweaks the LLM's output probabilities from $p = (p(1),\ldots,p(d))$ to a new (watermarked) distribution  $q=(q(1),\ldots,q(d))$, using a ``secret key''.

    \item The {\em detection algorithm} takes the whole text $x_1,\ldots,x_N$ and the same secret key, and outputs true/false depending on whether the text was watermarked.     
\end{itemize} 

The sampling algorithm handles undetectability by making sure that $q$ looks like $p$. The detection algorithm handles soundness and completeness by solving the following hypothesis testing problem:

\medskip

\textbf{Detection Algorithm Hypothesis Testing:}
\begin{itemize}
    \item $H_0:$ The text $x_1,\ldots,x_N$ is independent from the watermarking scheme.
    \item $H_a:$ The distribution of the text $x_1,\ldots,x_N$ was sampled from the watermarked distribution.
\end{itemize}

The core challenge in the design of a watermarking scheme is that we need to balance the trade-off between undetectability and completeness without using any prior knowledge about the text. If we simply do not change the LLM's distribution, then the scheme is undetectable but not complete. Similarly, if we make some obvious change to the LLM's distribution, the scheme is complete but detectable. 

\begin{remark}[Closed vs. Open Source Setting]
We just covered the closed watermarking setting. In the open setting, the definition of watermarking as well as the ``completeness'' and ``soundness'' requirements stay the same. But the adversary now has more information than just the text. So instead of ``undetectability'', we require ``unremovability''.

We allow the adversary to modify anything they want about the LLM -- the input, the parameters, even its its architecture. They  can also modify the distributions $q_{x_1},\ldots,q_{x_N}$ used to generate the watermarked text $x_1,\ldots,x_N$ -- but the adversary is not given access to the secret key. 

The goal of the adversary is to approximate the distributions $p_{x_1},\ldots,p_{x_N}$. The ``unremovability" requirement prevents the adversary from achieving this in polynomial time.
\end{remark}

\section{Closed Setting}
\label{sec:standard}
Now, let's formally describe the watermarking scheme in the closed setting. It works for any discrete process on a finite state space, but we will stick to the LLM setting to keep things concrete.

\subsection{The watermarking scheme} \label{s: watermarking scheme closed}

Denote the dictionary (the set of all possible tokens) by $[d] = \{1,\ldots,d\}$ and assume that $d$ is even -- otherwise, add a spurious token. To generate a text consisting of $N$ tokens, do the following for each step $j=1,\ldots,N$.

Randomly and independently paint the dictionary into two colors $1$ and $-1$ with exactly $d/2$ tokens receiving each color. Such a coloring is encoded by the random vector $\Delta_j \in \{-1,1\}^d$. 
Also, generate an independent Rademacher random variable $r_j$, whose purpose is to flip a coloring. The secret key consists of the colorings $r_1\Delta_1,\ldots,r_N\Delta_N$.

Now we will describe the watermark at each given step $j=1,\ldots,N$, and to ease the notation we will drop the subscript $j$.
 
The LLM computes the probability distribution of the next token: 
$$
p = (p(1),\ldots,p(d)).
$$
We will now watermark this distribution by increasing the probabilities $p(i)$ for the tokens $i$ where $r\Delta(i)=1$ and decreasing them where $r\Delta(i)=-1$.

To do that, let $I$ and $I^c$ denote the subsets of tokens colored $1$ and $-1$, respectively, i.e. 
$$
I \coloneqq \{ i \in [d]: \; \Delta(i)=1 \}.
$$
Now we match the tokens of opposite colors according to their rank: the most likely token colored $1$ is matched with the most likely token colored $-1$, the second most likely with the second most likely, and so on. Then, if $r=1$, we increase the probability of the first element of each pair and decrease the probability of the second element. If $r=-1$, we do the opposite. 

Formally, let $w(i)$ denote the $i$-th largest probability of a token with color $1$, i.e. the largest element of the set $\{p(i)\}_{i\in I}$. Similarly, let $\overline{w}(i)$ be the $i$-th largest element of the set $\{p(i)\}_{i\in I^c}$. Since each token $i \in [d]$ is ranked, there exists $i' \in [d/2]$ such that $w(i')=p(i)$ if $i\in I$ or $\overline{w}(i')=p(i)$, otherwise. Set $\e(i)$ to be the smaller probability in each pair of tokens:
\begin{equation} \label{eq: epsilons}
    \varepsilon(i)\coloneqq \min \big( w(i'),\overline{w}(i') \big).
\end{equation}
Next, we compute the watermarked distribution by setting for every $i\in [d]$:
\begin{equation}
\label{eq:vector_q_standard}
    q(i) = p(i) + \varepsilon(i)r \Delta(i).
\end{equation}
Sample token $x \in [d]$ from the probability distribution $q=(q(1),\ldots,q(d))$. 

Repeat this procedure for steps $j=1,\ldots,N$ to generate a text string $x_1,\ldots,x_N$.

\begin{remark}[Practical Considerations]
    In practice, we can fix just one random coloring $\Delta$ of the dictionary rather than generate an independent random coloring $\Delta_j$ for every next token. Independent colorings help us establish theoretical guarantees, but they could be too conservative in practice.
   
\end{remark}

\subsection{Validity and Undetectability}

Now let's check that the watermarking scheme we described is valid (i.e. $q$ is actually a probability distribution) and undetectable. 

\begin{proposition}[Validity]
    The watermarked distribution $q$ defined in \eqref{eq:vector_q_standard} is indeed a probability distribution over $[d]$.
\end{proposition}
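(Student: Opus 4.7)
The plan is to verify the two defining properties of a probability distribution: non-negativity of each $q(i)$ and summation to $1$. Both follow directly from the matched-pair structure of the construction and the specific choice of $\varepsilon(i)$ in \eqref{eq: epsilons}.

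First I would verify non-negativity. Fix a matched pair: let $i_1\in I$ and $i_2\in I^c$ be the two tokens assigned rank $i'\in[d/2]$, so that $p(i_1)=w(i')$, $p(i_2)=\overline{w}(i')$, and by \eqref{eq: epsilons} both tokens share $\varepsilon(i_1)=\varepsilon(i_2)=\min(w(i'),\overline{w}(i'))$. The sign $r\Delta(i)$ is $+r$ on $i_1$ and $-r$ on $i_2$, so within the pair one probability is increased by $\min(w(i'),\overline{w}(i'))$ and the other is decreased by the same amount (regardless of whether $r=+1$ or $r=-1$). The increased one is trivially non-negative, and the decreased one equals $\max(w(i'),\overline{w}(i'))-\min(w(i'),\overline{w}(i'))\ge 0$. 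Hence $q(i)\ge 0$ for all $i\in[d]$.

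Next I would check that $\sum_{i=1}^d q(i)=1$. From \eqref{eq:vector_q_standard},
\begin{equation*}
    \sum_{i=1}^d q(i) = \sum_{i=1}^d p(i) + r\sum_{i=1}^d \varepsilon(i)\Delta(i) = 1 + r\sum_{i=1}^d \varepsilon(i)\Delta(i),
\end{equation*}
so it suffices to show $\sum_{i=1}^d \varepsilon(i)\Delta(i)=0$. Grouping the sum over the $d/2$ matched pairs $(i_1,i_2)$ indexed by $i'\in[d/2]$, the contribution of each pair is $\varepsilon(i_1)\cdot(+1)+\varepsilon(i_2)\cdot(-1)=0$ by the equality $\varepsilon(i_1)=\varepsilon(i_2)$ established above. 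Summing over all pairs gives $0$, as required.

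There is no real obstacle here; the only point requiring minor care is making the bookkeeping of the pairing explicit, namely observing that $\varepsilon$ is constant on each matched pair and that $\Delta$ takes opposite signs on the pair, which together give the cancellation in both arguments. The sign $r\in\{-1,+1\}$ plays no role in either part of the verification, which is consistent with its purpose (flipping the coloring) rather than affecting validity.
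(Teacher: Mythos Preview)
Your approach is the same as the paper's: use the matched-pair structure to show cancellation in $\sum_i \varepsilon(i)\Delta(i)$ and use the definition of $\varepsilon(i)$ for non-negativity. The summation argument is fine.

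There is one slip in the non-negativity paragraph. You assert that ``the decreased one equals $\max(w(i'),\overline{w}(i'))-\min(w(i'),\overline{w}(i'))$.'' That is not always true: the sign $r$ determines which member of the pair is decreased, and nothing forces this to be the one with the larger original probability. If, say, $r=+1$ and $\overline{w}(i')=\min(w(i'),\overline{w}(i'))$, then the decreased value is $\overline{w}(i')-\min(w(i'),\overline{w}(i'))=0$, not $\max-\min$. The conclusion $q(i)\ge 0$ still holds, since in every case the decreased value is one of $\{w(i'),\overline{w}(i')\}$ minus the minimum of the two, hence $\ge 0$; but the sentence as written is incorrect. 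The paper sidesteps this bookkeeping by observing directly that $q(i)\ge p(i)-\varepsilon(i)\ge p(i)-p(i)=0$, which avoids any case split on $r$ or on which side of the pair carries the smaller probability.
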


\begin{proof} 
By design, every token $i\in I$ is matched to some $j\in I^c$ such that $\varepsilon(i)=\varepsilon(j)$. Since $\Delta(i)$ and $\Delta(j)$ have opposite signs, we have
$\Delta(i)\varepsilon(i) = -\Delta(j)\varepsilon(j)=0$. Thus, when we sum all these terms, they cancel:  
\begin{equation*}
\sum_{i=1}^d\varepsilon(i)\Delta(i)= \sum_{i\in I}\Delta(i)\varepsilon(i) +\sum_{j\in I^c}\Delta(j)\varepsilon(j) =0,
\end{equation*}
and we get
\begin{equation*}
\sum_{i=1}^d q(i) = \sum_{i=1}^dp(i) + r\sum_{i=1}^d\varepsilon(i)\Delta(i) = 1.
\end{equation*}
So, to check that $q$ defines a probability distribution, it remains to show that it has nonnegative entries. But this follows from our construction: 
\begin{equation*}
q(i)\ge p(i)-\varepsilon(i) = p(i)-\min \left( w(i'),\overline{w}(i') \right) \ge p(i)-p(i)=0. \qedhere
\end{equation*}
\end{proof}

\begin{proposition}[Undetectability]
\label{prop:undetectability_closed}
    The watermarked distribution $q$ defined in \eqref{eq:vector_q_standard} is indeed a probability distribution over $[d]$, and the watermarking satisfies the undetectability requirement.
\end{proposition}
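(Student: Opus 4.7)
The validity part repeats the previous proposition, so the real content is undetectability. My plan is to show the stronger statement that the joint law of the watermarked text $(x_1,\ldots,x_N)$ without access to the secret key coincides exactly with the joint law of the unwatermarked LLM output. Then any detector that sees only the text cannot distinguish the two, which is the undetectability requirement.

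The key observation is that for any fixed step $j$, the vector $\varepsilon_j$ defined in \eqref{eq: epsilons} depends only on $p_j$ and on $\Delta_j$ (through the partition into $I$ and $I^c$), but \emph{not} on the global sign flip $r_j$. So conditionally on $\Delta_j$ and on the history $x_1,\ldots,x_{j-1}$ (which determines $p_j$), the quantity $\varepsilon_j(i)\Delta_j(i)$ is a deterministic number. Taking expectation over $r_j$, which is an independent Rademacher, gives
\begin{equation*}
\E\bigl[q_j(i)\,\big|\,\Delta_j, x_1,\ldots,x_{j-1}\bigr] = p_j(i) + \varepsilon_j(i)\Delta_j(i)\,\E[r_j] = p_j(i).
\end{equation*}
Taking a further expectation over $\Delta_j$ preserves this identity, so
\begin{equation*}
\Prob{x_j = i \,\big|\, x_1,\ldots,x_{j-1}} = \E\bigl[q_j(i)\,\big|\,x_1,\ldots,x_{j-1}\bigr] = p_j(i).
\end{equation*}

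I would then conclude by the chain rule: the conditional distribution of $x_j$ given the past under the watermarking scheme equals the conditional distribution under the unwatermarked LLM, so the joint laws of $(x_1,\ldots,x_N)$ coincide. Since the secret key $(r_1\Delta_1,\ldots,r_N\Delta_N)$ is not part of the observation, any test statistic is a function of the text alone and therefore has identical distributions under watermarking and no watermarking, giving undetectability.

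I do not expect a real obstacle here: the scheme was engineered precisely so that the Rademacher flip $r_j$ kills the expected perturbation in one line. The only subtle point worth stressing is that $\varepsilon_j$ must be measurable with respect to $\Delta_j$ and $p_j$ only, which is clear from the construction in \eqref{eq: epsilons} since the ranking-based matching uses $\Delta_j$ to split $[d]$ and then sorts the probabilities inside each color class without consulting $r_j$. If we had defined $\varepsilon_j$ in a way that depended on $r_j$, the above cancellation would fail.
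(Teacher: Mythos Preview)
Your proposal is correct and follows the same overall strategy as the paper: integrate out the secret-key randomness at each step to show the marginal law of $x_j$ given the history is $p_j$, then chain over $j$.

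There is one small but notable difference in execution. The paper's proof averages over $\Delta(i)$, writing
\[
\mathbb{P}_q\{x=i\} = (p(i)+r\varepsilon(i))\Prob{\Delta(i)=1} + (p(i)-r\varepsilon(i))\Prob{\Delta(i)=-1},
\]
which tacitly treats $r\varepsilon(i)$ as fixed while varying $\Delta(i)$; since $\varepsilon(i)$ depends on the full coloring $\Delta$, this step really relies on the symmetry $\Delta \overset{d}{=} -\Delta$ (under which $\varepsilon(i)$ is invariant) rather than on independence of $\varepsilon(i)$ from $\Delta(i)$. Your route---condition on $\Delta_j$ and the history, observe that $\varepsilon_j(i)\Delta_j(i)$ is then deterministic, and kill it with $\E r_j = 0$---is cleaner and makes the measurability issue explicit. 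Both arguments land in the same place, but yours isolates the single independent Rademacher $r_j$ as the source of cancellation, which is exactly how the scheme was designed.
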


\begin{proof}
For every $i\in [d]$, the probability that $x=i$ under $q$ is
\begin{equation*}
\begin{split}
&\mathbb{P}_q\{x=i\} = (p(i)+r\varepsilon(i))\Prob{\Delta(i)=1} + (p(i)-r\varepsilon(i))\Prob{\Delta(i)=-1} \\
&= (p(i)+r\varepsilon(i))\frac{1}{2} + (p(i)-r\varepsilon(i))\frac{1}{2} = p(i). 
\end{split} 
\end{equation*}
Because at each step $j=1,\ldots,N$ we sample an independent copy of $\Delta$, the distribution of the text $x_1,\ldots,x_N$ sampled from the watermarked distributions $q$ and the unwatermarked distributions $p$ remains the same.
\end{proof}

\subsection{Soundness and Completeness
}
We now describe how our detection algorithm works. The idea is that whenever $r\Delta(x)=1$ is positive,  the chance of the token $x$ to appear in the text is increased, and when $r\Delta(x)=1$ the chance is decreased. 

Thus, the watermarked text should contain more tokens $x$ for which $r\Delta(x)=1$ than those for which $r\Delta(x)=-1$. This is testable: just check if the empirical mean of the random variables $r_j\Delta_j(x_j)$ significantly exceeds zero.

On the other hand, an non-watermarked text is independent of the key, so it should contain roughly the same number of tokens $x$ for which $r\Delta(x)=1$ as those for which $r\Delta(x)=-1$. Thus, a non-watermarked text has a negligible chance to pass the test above.

Let's make this all precise.

\begin{algorithm}
\caption{Watermark Detection}
\label{algo:detection_standard}
\begin{algorithmic}
\Require The text $x_1,\ldots,x_{N}$. The secret key: $r_1\Delta_1,\ldots,r_N\Delta_N$. Tolerance  $\delta$.
\Ensure True or False.
\smallskip

\noindent Compute $Z = \frac{1}{N}\sum_{j=1}^N r_j\Delta_j\big(x_j\big)$. 

\smallskip

\noindent If $Z\ge \sqrt{2\log(1/\delta)/N}$ return True, else return False.

\end{algorithmic}
\end{algorithm}

\begin{proposition}[Soundness]
\label{prop:soundness_closed}
The watermarking scheme described in Section \ref{s: watermarking scheme closed} is sound probability at least $1-\delta$. 
\end{proposition}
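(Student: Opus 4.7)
The plan is to show that under the null hypothesis (the text is independent of the secret key), the statistic $Z = \frac{1}{N}\sum_{j=1}^N r_j \Delta_j(x_j)$ is a sum of independent, zero-mean, bounded random variables, and then apply Hoeffding's inequality to get the tail bound $\Prob{Z \ge \sqrt{2\log(1/\delta)/N}} \le \delta$.

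First I would condition on the text $x_1,\ldots,x_N$. Because under $H_0$ the text is independent of the key $r_1\Delta_1,\ldots,r_N\Delta_N$, and because the pairs $(r_j,\Delta_j)$ are mutually independent across $j$, the random variables $Y_j \coloneqq r_j \Delta_j(x_j)$ are independent given the text. Next I would check that each $Y_j$ has mean zero: the Rademacher variable $r_j$ is independent of everything else and symmetric, so $\E[r_j \Delta_j(x_j) \mid x_1,\ldots,x_N, \Delta_j] = 0$. Each $Y_j$ takes values in $\{-1,+1\}$, hence lies in $[-1,1]$.

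Then Hoeffding's inequality applied conditionally on the text gives
\begin{equation*}
\Probevent{Z \ge t \mid x_1,\ldots,x_N} \le \exp\!\left(-\frac{N t^2}{2}\right)
\end{equation*}
for every $t\ge 0$. Since this bound is uniform in the text, it also holds unconditionally. Choosing $t = \sqrt{2\log(1/\delta)/N}$ yields $\Probevent{Z \ge t} \le \delta$, which is exactly the statement that the detection algorithm returns True with probability at most $\delta$ on a non-watermarked text.

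There is no serious obstacle: the only subtle point is justifying that the $Y_j$'s are independent and centered even though the $x_j$'s may have arbitrary dependencies on one another. This is handled cleanly by conditioning on the entire text before invoking independence of the key across coordinates. The marginal symmetry of $r_j$ removes any worry about the constrained distribution of $\Delta_j$ (uniform over vectors with exactly $d/2$ entries equal to $+1$): regardless of the sign of $\Delta_j(x_j)$, multiplication by the independent Rademacher $r_j$ symmetrizes $Y_j$.
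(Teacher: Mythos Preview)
Your proof is correct and follows essentially the same approach as the paper: under $H_0$ the summands $r_j\Delta_j(x_j)$ are independent, bounded by $1$, and Hoeffding's inequality gives the tail bound with the stated threshold. Your version is slightly more explicit---you spell out the conditioning on the text and the symmetrization by $r_j$ to verify mean zero---whereas the paper simply asserts the summands are independent and bounded; but the argument is the same.
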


\begin{proof}
    Suppose we are under the null hypothesis $H_0$ that the text $x_1,\ldots,x_N$ is independent of the secret key $r_1\Delta_1,\ldots,r_N\Delta_N$. Then the test $Z$ defined in Algorithm \ref{algo:detection_standard} is the normalized sum of independent bounded random variables. By Hoeffding's inequality, for every $t>0$ we have
    \begin{equation*}
        \Prob{Z\ge t}\le e^{-N t^2/2},
    \end{equation*}
    which is at most $\delta$ for $t=\sqrt{2\log(1/\delta)/N}$.
\end{proof}

\begin{theorem}[Completeness]
\label{thm:completness_closed}
    Let $\delta \in (0,1)$ and $N\ge 2\log(1/\delta)$. Let $p^*_j$ denote the probability of the most likely token in the dictionary at step $j=1,\ldots, N$. If, for some  $\gamma >0$,
    \begin{equation} \label{eq:main_standard}
        \frac{1}{20N}\sum_{i=1}^N(1-p^*_j)\ge \sqrt{\frac{2\log(1/\delta)}{N}} + \gamma,
    \end{equation}
    then the watermarking scheme described in Section \ref{s: watermarking scheme closed} is complete with probability at least $1-e^{-N\gamma^2/2}$.
\end{theorem}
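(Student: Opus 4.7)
The plan is to show that the conditional mean of $Z$ under the watermarked law already exceeds the Algorithm~\ref{algo:detection_standard} threshold by at least $\gamma$, and then to close the fluctuation gap via a martingale form of Hoeffding's inequality. First I compute the per-step drift. Let $\mathcal{F}_{j-1}=\sigma(x_1,\ldots,x_{j-1},\Delta_1,\ldots,\Delta_{j-1},r_1,\ldots,r_{j-1})$. Since $p_j$ is $\mathcal{F}_{j-1}$-measurable while $(\Delta_j,r_j)$ is independent of $\mathcal{F}_{j-1}$, and $x_j\sim q_j$ with $q_j(i)=p_j(i)+r_j\Delta_j(i)\varepsilon_j(i)$, direct summation gives $\E[r_j\Delta_j(x_j)\mid\mathcal{F}_{j-1},\Delta_j,r_j] = r_j\sum_i\Delta_j(i)p_j(i)+\sum_i\varepsilon_j(i)$; integrating out the centered Rademacher $r_j$ kills the first term and leaves the drift
\begin{equation*}
    \mu_j \coloneqq \E[r_j\Delta_j(x_j)\mid\mathcal{F}_{j-1}] = \E_{\Delta_j}\!\left[\sum_i\varepsilon_j(i) \;\middle|\; p_j\right].
\end{equation*}

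The heart of the argument is a combinatorial lemma saying $\mu_j \ge (1-p^*_j)/20$ almost surely. Sort the tokens so that $p_j(1)\ge\cdots\ge p_j(d)$. Because each rank-matched pair contributes twice its minimum, one has the identity $\sum_i\varepsilon_j(i)=2\sum_{i\in L}p_j(i)$, where $L$ is the (random) set of ``losers'' -- tokens whose rank-matched partner has strictly larger probability. Linearity gives $\E\sum_i\varepsilon_j(i)=2\sum_i p_j(i)\,\PP(i\in L)$. A token at sorted position $i$ is a loser iff, among the $i-1$ earlier (higher-probability) positions, strictly more carry the color opposite to $\Delta_j(i)$ than carry the same color; conditional on $\Delta_j(i)$, this is a hypergeometric tail event whose probability is bounded below by a positive universal constant $c$ for every $i\ge 2$ and every even $d\ge 2$ (a short symmetry / random-walk computation suffices). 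Since position $i=1$ is never a loser and $\sum_{i\ge 2}p_j(i)=1-p^*_j$, this yields $\mu_j\ge 2c(1-p^*_j)\ge (1-p^*_j)/20$.

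Finally, the differences $D_j\coloneqq r_j\Delta_j(x_j)-\mu_j$ form a martingale-difference sequence adapted to $(\mathcal{F}_j)$, and each $D_j$ lies in a (predictable) interval of length $2$ since $|r_j\Delta_j(x_j)|=1$. Azuma--Hoeffding therefore gives $\PP\bigl(\sum_j D_j\le -N\gamma\bigr)\le \exp(-N\gamma^2/2)$, and on the complementary event the hypothesis \eqref{eq:main_standard} yields
\begin{equation*}
    Z = \frac{1}{N}\sum_j D_j + \frac{1}{N}\sum_j\mu_j \ge -\gamma + \frac{1}{20N}\sum_j(1-p^*_j) \ge \sqrt{\tfrac{2\log(1/\delta)}{N}},
\end{equation*}
so the detection algorithm returns True with probability at least $1-e^{-N\gamma^2/2}$. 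The main obstacle I anticipate is the combinatorial lemma: producing a clean uniform lower bound on $\PP(i\in L)$ valid for every $i\ge 2$ and every even $d$, since the exact balancing constraint on $\Delta$ makes the relevant event a hypergeometric rather than an i.i.d.-Bernoulli tail and must be handled by a genuine symmetrization argument.
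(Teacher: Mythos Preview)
Your architecture matches the paper's exactly: compute the per-step drift $\E[r_j\Delta_j(x_j)\mid\mathcal{F}_{j-1}]=\E_{\Delta_j}\sum_i\varepsilon_j(i)$, lower-bound it by $(1-p^*_j)/20$, and close with Azuma--Hoeffding on martingale differences lying in predictable intervals of length~$2$. The only genuine divergence is in how the combinatorial lower bound is obtained. The paper's Lemma~\ref{lemma:nonincreasing_rearrangments} works with the order statistics $w_i,\overline w_i$ directly: hypergeometric concentration shows that $|I\cap[10i]|\ge i$ and $|I^c\cap[10i]|\ge i$ simultaneously with probability $\ge 1/2$ for each $i\ge 2$, which forces $\min(w_i,\overline w_i)\ge p_{10i}$; the indices between multiples of $10$ are then filled in by monotonicity of $p$, and this padding is what produces the factor $1/20$. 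Your loser decomposition $\sum_i\varepsilon(i)=2\sum_{i\in L}p(i)$ followed by a uniform bound $\PP(i\in L)\ge c$ for $i\ge 2$ is a legitimate and arguably more transparent alternative: conditionally on $\Delta(i)$ the event ``strictly more opposite-colored among positions $1,\ldots,i-1$'' is a one-sided hypergeometric tail with the urn biased in your favor by exactly one ball, and a direct check (the worst case is $i=3$, where the probability tends to $1/4$ as $d\to\infty$; for larger odd $i$ the symmetry-plus-central-term argument gives $(1-O(1/\sqrt{i}))/2$) yields $c\ge 1/4$, hence drift $\ge (1-p^*_j)/2$---a better constant than the paper's. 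One caveat worth tracking: your identity $\sum_i\varepsilon(i)=2\sum_{i\in L}p(i)$ tacitly assumes the $p_j(i)$ are distinct; with a tied pair neither member is a strict loser yet both still contribute $\varepsilon$, so fix a tie-breaking convention (e.g.\ by token index) before invoking linearity.
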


\begin{remark}[Minimal entropy allows watermarking]
\label{remark:minimal_entropy}
    The assumption \eqref{eq:main_standard} on the distributions of the tokens is nearly necessary. If $p_j^{\ast}$ gets too close to $1$, it means that the token is almost deterministic, so watermarking it is impossible. The assumption \eqref{eq:main_standard} ensures that at least some fraction of the text is non-deterministic, making watermarking possible in principle. 
\end{remark}

Before we proceed to the proof of Theorem \ref{thm:completness_closed}, we require one preliminary lemma. 
\begin{lemma}[Maxima of subsets]
\label{lemma:nonincreasing_rearrangments}
    Consider any numbers $p_1\ge \cdots \ge p_d\ge 0$. Let $I$ be a random subset of $[d]$ with cardinality $d/2$, uniformly sampled from all such subsets. Denote by $w_i$ the $i$-th largest element of $\{p_i\}_{i\in I}$, and by $\overline{w}_i$ the $i$-th largest element of $\{p_i\}_{i\in I^c}$. Then
    \begin{equation*}
        \frac{1}{20} \sum_{i=2}^d p_i
        \le \E \sum_{i=1}^{d/2} \min(w_i, \overline{w}_i)  
        \le \sum_{i=2}^d p_i.
    \end{equation*}
\end{lemma}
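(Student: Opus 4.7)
The plan is to reinterpret $\sum_{i=1}^{d/2}\min(w_i,\overline{w}_i)$ as a weighted sum of the $p_j$'s with coefficients coming from a simple random-walk indicator. Encode the random subset $I$ by signs $\sigma_k=+1$ if $k\in I$ and $\sigma_k=-1$ otherwise, and let $M_j=\sum_{k=1}^{j}\sigma_k$. The key observation is that $\min(w_i,\overline{w}_i)=p_{K_i}$, where $K_i$ is the smallest index $j$ such that the prefix $\{p_1,\ldots,p_j\}$ contains at least $i$ tokens of each color; equivalently, $\{K_1,\ldots,K_{d/2}\}$ coincides with the set of indices $j$ at which the walk $|M|$ decreases, because $|M|$ drops at step $j$ precisely when the minority count $\min(n_+,n_-)$ advances. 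Setting $D_j=\one\{|M_j|<|M_{j-1}|\}$, this yields the identity
$$
\sum_{i=1}^{d/2}\min(w_i,\overline{w}_i)=\sum_{j=1}^{d}p_j D_j.
$$

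The upper bound is immediate. Since $\min(w_i,\overline{w}_i)\le w_i$ and $\le\overline{w}_i$, one has $\sum_i\min(w_i,\overline{w}_i)\le\min\bigl(\sum_{i\in I}p_i,\sum_{i\in I^c}p_i\bigr)$, and whichever of $I,I^c$ does not contain the largest element $p_1$ has total at most $\sum_{j=2}^{d}p_j$; taking expectation preserves the inequality.

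For the lower bound, linearity gives $\E\sum_i\min(w_i,\overline{w}_i)=\sum_{j=1}^{d}p_j\Prob{D_j=1}$, and it suffices to show $\Prob{D_j=1}\ge 1/20$ for every $j\ge 2$. Conditioning on $M_{j-1}=m\neq 0$ and counting the remaining $\pm$ tokens, the probability that the next step moves the walk toward zero is
$$
\Prob{D_j=1\given M_{j-1}=m}=\frac{1}{2}+\frac{|m|}{2(d-j+1)}\ge\frac{1}{2},
$$
so $\Prob{D_j=1}\ge\tfrac{1}{2}\Prob{M_{j-1}\neq 0}$. For $j-1$ odd this already gives $\Prob{D_j=1}\ge 1/2$, so the only real work is to bound $\Prob{M_{j-1}\neq 0}$ away from zero when $j-1$ is even. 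In that case $\Prob{M_{j-1}=0}$ is the hypergeometric $\binom{d/2}{(j-1)/2}^2/\binom{d}{j-1}$, and the worst case $j=3$ admits a direct computation giving $\Prob{D_3=1}=d/(4(d-1))\ge 1/4$; a similar estimate (or a crude central-binomial-coefficient bound) handles the remaining even values of $j-1$. Since $1/4\ge 1/20$, the lemma follows. The main obstacle is this last hypergeometric estimate; everything else is combinatorial bookkeeping.
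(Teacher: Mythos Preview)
Your approach is correct and takes a genuinely different route from the paper's. The upper bound argument is identical in both. For the lower bound, the paper never rewrites the sum: it bounds each term $\E\min(w_i,\overline w_i)$ separately by showing, via a hypergeometric Hoeffding--Chernoff bound, that with probability at least $1/2$ both $I$ and $I^c$ meet $[10i]$ in at least $i$ points, whence $\min(w_i,\overline w_i)\ge p_{10i}$; it then uses monotonicity of the $p_j$ to fill in the skipped indices. Your random-walk identity $\sum_i\min(w_i,\overline w_i)=\sum_j p_j D_j$ is cleaner and avoids concentration altogether, and once the last step is completed it actually delivers $\Prob{D_j=1}\ge 1/6$ for every $j\ge 2$, a sharper constant than the paper's $1/20$. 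The paper's approach, in exchange, never has to analyze the hypergeometric point mass $\Prob{M_{2m}=0}$.

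One loose end: you assert that $j=3$ is the worst case and gesture at ``a similar estimate'' for the other even values of $j-1$, but you do not carry it out. The quickest fix is to check that $f(m)=\binom{d/2}{m}^2\big/\binom{d}{2m}$ satisfies
\[
\frac{f(m+1)}{f(m)}=\frac{(d-2m)(2m+1)}{2(m+1)(d-2m-1)},
\]
which is $<1$ iff $m<d/4-1/2$; hence $f$ is unimodal with its minimum near $d/4$, so on the range $1\le m\le d/2-1$ it is maximized at the endpoints $m=1$ and $m=d/2-1$, giving $\Prob{M_{j-1}=0}\le f(1)=d/(2(d-1))\le 2/3$ for all even $j-1$ in range and therefore $\Prob{D_j=1}\ge\tfrac12\cdot\tfrac13=1/6$. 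With this filled in, your proof is complete (and tighter than the paper's).
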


We prove this lemma in Appendix \ref{s: proof maxima}.

\begin{proof}[Proof of Theorem \ref{thm:completness_closed}]
Assume that the text is watermarked. Thus, at each  step, given the previous tokens, the random token $x$ is picked from the distribution $q$ defined in  \eqref{eq:vector_q_standard}. Then
$$
\E r\Delta(x)
= \E \sum_{i=1}^d r\Delta(i) \one_{\{x=i\}}
= \E \sum_{i=1}^d r\Delta(i)\E \Big[ \one_{\{x=i\}} \vert \Delta \Big].
$$
By \eqref{eq:vector_q_standard}, we have 
$$
\E \Big[ \one_{\{x=i\}} \vert \Delta \Big]
= \Prob{x=i \given \Delta}
= p(i) + r\e(i) \Delta(i).
$$
So, using that $r\Delta(i)$ has Rademacher distribution, we get 
$$
\E r\Delta(x)
= \sum_{i=1}^d p(i) \underbrace{\E r\Delta(i)}_{=0} + \E \sum_{i=1}^d \e(i) \underbrace{r^2\Delta(i)^2}_{=1}
= \E \sum_{i=1}^d \e(i).
$$
Now, using the definition of $\e(i)$ in \eqref{eq: epsilons} and applying Lemma \ref{lemma:nonincreasing_rearrangments}, we get 
$$
\E r\Delta(x)
= \E \sum_{i=1}^{d/2} \min \big( w(i), \overline{w}(i) \big) 
\ge \frac{1}{20}\big(1-p^*\big),
$$
where $p^* = \max(p(1), \ldots, p(d))$. repeating this argument at each step $j=1,\ldots,N$ conditionally on the previous steps, we get 
\begin{equation}
\label{eq:conditional_expectation_token_closed}
\E[r_j\Delta_j(x_j)|x_1,\ldots,x_{j-1}]\ge \frac{1}{20}\big(1-p^{\ast}_j\big).
\end{equation}
Next, set $Z_{0}\coloneqq0$ and notice that
\begin{equation*}
Z_{j}\coloneqq Z_{j-1} + r_j\Delta_j(x_j)-\E[r_j\Delta_j(x_j)|x_1,\ldots,x_{j-1}],
\end{equation*}
is a martingale satisfying that $|Z_{k}-Z_{k-1}|\le 1$ almost surely. Thus, by combining the Azuma-Hoeffding's inequality and \eqref{eq:conditional_expectation_token_closed} we obtain
\begin{equation*}
    \Prob*{Z\ge \frac{1}{20N}\sum_{i=1}^N(1-p^*_j)-\gamma}\ge 1-e^{-N\gamma^2/2}.
\end{equation*}
Now it remains to use  assumption \eqref{eq:main_standard}.
\end{proof}

\section{Open Setting}
\label{sec:open}
\subsection{The watermarking scheme}
\label{sec: watermarking scheme open}
Unfortunately, in the open source setting, there is no guarantee that an adversary cannot learn our watermarking scheme. So we propose a completely different approach in this setting. 

We randomly perturb the logits in \eqref{eq:softmax_rule}, similarly to \cite{christ2024provably}. However, the perturbation vector in \cite{christ2024provably} was sampled from a multivariate Gaussian distribution (the same at each step), while we propose to draw perturbation vectors from a random Gaussian mixture, which changes independently at each step. 

Let's describe the construction of the perturbation vector $\Delta \in \mathbb{R}^d$ in detail. First, pick a $k$-sparse subset $S\subset [d]$ uniformly at random, and compute the unit norm vector 
$$
\mu = k^{-1/2}\mathds{1}_S
$$
supported on $S$. Let $r$ be a Rademacher random variable. For a fixed $\varepsilon>0$, let $G\sim N(0,\varepsilon^2 I)$ and define the perturbation vector
\begin{equation}
\label{eq:mixture_Gaussian}
\Delta\coloneqq
\begin{cases}
    G + \varepsilon \mu, \quad \text{if $r=1$}\\
    G -\varepsilon \mu, \quad \text{if $r=-1$.}
\end{cases}
\end{equation}
The role of the tuning parameter  $\varepsilon$ is to control the tradeoff between detectability of the watermark and the quality of the text. The {\em secret key} is the collection of vectors $G_1,\ldots,G_{N}$ used to generate the i.i.d. copies $\Delta_1,\ldots, \Delta_N$ of $\Delta$ at each step according to \eqref{eq:mixture_Gaussian}.

At each new step, the LLM computes the logits $\ell(1),\ldots,\ell(d)$, and our watermarking algorithm samples the new token according to the watermarked softmax rule given by
\begin{equation}
\label{eq:vector_q_open}
    q(i)\coloneqq \frac{e^{\ell(i)+\Delta(i)}}{e^{\ell(1)+\Delta(1)}+\cdots+e^{\ell(d)+\Delta(d)}},
\end{equation}
which is just a perturbed version of \eqref{eq:softmax_rule}. 
We allow the adversary to have knowledge of the distribution $q$.
\subsection{Unremovability}
Intuitively, in order to remove the watermarking scheme, the adversary needs to guess the values of the perturbation vectors $\Delta_1,\ldots,\Delta_N$. This requires the adversary to learn the mean $\mu$ accurately.

But the mixture distribution is chosen exactly for the purpose of making the adversary's task computationally hard. Indeed, estimating the mean $\mu$ based on the samples $\Delta_1,\ldots,\Delta_N$ is a well-known computationally hard problem in robust statistics, called {\em sparse mean estimation under Huber's contamination noise}.

The choice of the mixture distribution allow us to exploit the following hypothesis testing version of the sparse mean estimation problem.
\medskip

\textbf{Sparse Mean Hypothesis Testing:}
\begin{equation}
\label{eq:sparse_mean_hypotest}
\begin{aligned}
\text{\tiny$\bullet$}\quad H_0 &: \Delta_1,\ldots,\Delta_N \sim N(0,\varepsilon^2 I) \\
\text{\tiny$\bullet$}\quad H_a &: \Delta_1,\ldots,\Delta_N \sim \text{mixture in } \eqref{eq:mixture_Gaussian}
\end{aligned}
\end{equation}

Clearly, if it is impossible to distinguish $H_0$ and $H_a$ in the sparse mean hypothesis testing, then it is not possible to estimate accurately the mean $\mu$ (or $-\mu$) based on $\Delta_1,\ldots,\Delta_N$. 

Perhaps surprisingly, Brennan and Bresler \cite{brennan2020reducibility} showed that under a well-known conjecture in theoretical computer science, {\em the $k$-BPC conjecture}, solving the sparse mean estimation problem can be hard:

\begin{theorem}[Brennan and Bresler \cite{brennan2020reducibility}]
\label{thm:BB_impossibilit}
In the sparsity regime $k\ll \sqrt{d}$, no polynomial-time algorithm on $N$ can solve \eqref{eq:sparse_mean_hypotest} with less than $N =\Tilde{\Omega}(k^2)$ samples, assuming the $k$-BPC conjecture is true.  On the other hand, there exists a computationally infeasible algorithm that solves \eqref{eq:sparse_mean_hypotest} with $k=\Theta(k\log d)$ samples.
\end{theorem}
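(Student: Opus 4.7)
The plan is to handle the two parts of Theorem~\ref{thm:BB_impossibilit} separately: first the information-theoretic upper bound via a computationally unbounded scan test using $\Theta(k\log d)$ samples, then the conditional computational lower bound of $\tilde{\Omega}(k^2)$ via a Brennan--Bresler style reduction from the $k$-partite planted clique problem.

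For the upper bound, the key observation is that under $H_a$ the perturbation has covariance $\varepsilon^2 I + \varepsilon^2 \mu\mu^\top$ (with $\mu = k^{-1/2}\one_{S^\ast}$ for the unknown true support $S^\ast$), while under $H_0$ the covariance is $\varepsilon^2 I$. Since the marginal mean is zero in both cases (the Rademacher $r$ kills it), the distinguishing signal lives in the second moment and cannot be seen by any coordinatewise mean test. A computationally unbounded algorithm scans all $\binom{d}{k}$ candidate supports $S\subset[d]$ and computes
\begin{equation*}
T_S = \sum_{j=1}^N \ip{\mu_S}{\Delta_j}^2, \qquad \mu_S = k^{-1/2}\one_S,
\end{equation*}
accepting $H_a$ if $\max_S T_S$ significantly exceeds $N\varepsilon^2$. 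Under $H_0$ each $T_S$ concentrates around $N\varepsilon^2$ with subexponential fluctuations of order $\varepsilon^2\sqrt{N}$, and a union bound over $\binom{d}{k}\le d^k$ supports gives $\max_S T_S \le N\varepsilon^2 + O(\varepsilon^2\sqrt{Nk\log d})$ with high probability. Under $H_a$, at $S = S^\ast$ the statistic concentrates around $2N\varepsilon^2$, so the separation of order $N\varepsilon^2$ dominates the noise $\varepsilon^2\sqrt{Nk\log d}$ precisely when $N = \Omega(k\log d)$.

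For the computational lower bound, I would construct a polynomial-time reduction from $k$-BPC to the sparse mean testing problem~\eqref{eq:sparse_mean_hypotest}. Given a $k$-BPC instance with $k$ vertex classes of size $\Theta(d/k)$, the idea is to Gaussianize the graph in the spirit of Brennan--Bresler: map each edge bit to a centered real random variable whose marginal distribution matches one coordinate of $N(0,\varepsilon^2)$, and assemble the bits across the $k$ parts so that each output vector in $\R^d$ is distributed as the mixture \eqref{eq:mixture_Gaussian} --- with random sign and random support aligned to the planted clique location --- under the planted alternative, and as $N(0,\varepsilon^2 I)$ under the uniform null. One then verifies small total-variation distance between the reduction output and the target distributions, so that the conjectured threshold $N_{\text{BPC}} = \tilde\Omega(k^2)$ at sparsity $k\ll\sqrt{d}$ transfers directly to an $\tilde\Omega(k^2)$ sample lower bound for sparse mean testing in polynomial time.

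The main obstacle is the reduction in the second part. Matching a Gaussian mixture in total variation while preserving polynomial running time requires a careful cloning / rejection-sampling gadget and a delicate moment-matching argument that must align the planted-clique support with the shared random support $S^\ast$ across all $N$ samples. The reduction also has to operate essentially at the margin of the conjectured hard regime, so any slack introduces a polynomial loss in $k$ and weakens the $\tilde\Omega(k^2)$ conclusion. By contrast, the upper-bound concentration argument and the routine translation of decision probabilities through the reduction are standard once the reduction gadget is in place.
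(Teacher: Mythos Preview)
The paper does not prove Theorem~\ref{thm:BB_impossibilit}; it is quoted as a result of Brennan and Bresler \cite{brennan2020reducibility} and used as a black box in the proof of Proposition~\ref{prop:unremovability}. So there is no ``paper's own proof'' to compare your proposal against.

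That said, your sketch is aimed in the right direction. The scan test for the upper bound is the standard computationally unbounded approach, and your covariance calculation ($\varepsilon^2 I$ versus $\varepsilon^2 I + \varepsilon^2\mu\mu^\top$) and union-bound arithmetic correctly recover the $\Theta(k\log d)$ threshold. For the lower bound you correctly identify that the Brennan--Bresler machinery works by a polynomial-time average-case reduction from $k$-partite planted clique, and you name the genuine technical bottleneck (Gaussianization plus support alignment while keeping TV error small). What you have written is an outline rather than a proof: the cloning/rejection-kernel gadgets and the TV accounting are the substance of \cite{brennan2020reducibility} and would need to be reproduced in full to constitute a proof. For the purposes of this paper, however, none of that is required---the theorem is simply cited.
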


The gap between $k$ and $k^2$ (hiding log factors) is an example of statistical-to-computational gaps, a topic extensively studied in theoretical computer science literature (see \cite{kunisky2019notes} and the references therein).

Impossibility results in machine learning, statistics, and computer science, which result in statistical-to-computational gaps, are usually interpreted as ``negative'' statements. Here, our perspective is different: we leverage a statistical-to-computational gap to our advantage -- to safeguard our watermarking from adversarial attacks.

We now prove the main fact about unremovability:
\begin{proposition}[Unremovability]
\label{prop:unremovability}
Assume that $ d^{\delta} \ll N^{1+\delta} \ll d$ for some fixed $\delta>0$. Let $k = N^{(1+\delta)/2}$. Then, at each step, any Gaussian distribution the adversary can learn in polynomial time has TV distance at least $0.4$ from the distribution of the watermarked logits $\ell(i) + \Delta(i)$ in \eqref{eq:vector_q_open}.
\end{proposition}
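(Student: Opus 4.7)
The plan is to argue by contradiction via a reduction to Theorem \ref{thm:BB_impossibilit}. Suppose that the adversary has a polynomial-time algorithm $\mathcal{A}$ which, given the observed perturbations $\Delta_1,\ldots,\Delta_N$ (extracted from the watermarked logits $\ell_j + \Delta_j$ by subtracting the known $\ell_j$), outputs a Gaussian $\tilde P = N(\tilde m, \tilde\Sigma)$ on $\mathbb{R}^d$ with $TV(\tilde P, P_1) < 0.4$, where $P_1$ denotes the mixture from \eqref{eq:mixture_Gaussian}. The goal is to use $\mathcal{A}$ to build a polynomial-time test for \eqref{eq:sparse_mean_hypotest} and derive a contradiction. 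The parameter regime is precisely that of Theorem \ref{thm:BB_impossibilit}: with $k=N^{(1+\delta)/2}$ and $N^{1+\delta}\ll d$, we have $k\ll\sqrt d$ and $N\ll N^{1+\delta}=k^2$, putting us strictly below BB's computational sample threshold.

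The key structural observation is that the covariance of $P_1$ equals $\varepsilon^2(I + \mu\mu^T)$, so the planted direction $\mu$ is the unique top eigenvector of $\mathrm{cov}(P_1)-\varepsilon^2 I$. The heart of the reduction is to argue that any Gaussian $\tilde P$ with $TV(\tilde P, P_1)<0.4$ must satisfy $\|\tilde\Sigma - \varepsilon^2(I+\mu\mu^T)\|_{\mathrm{op}}\leq c\varepsilon^2$ for some absolute $c<1$; then the Davis--Kahan inequality would yield a unit vector $\hat\mu$ (the top eigenvector of $\tilde\Sigma-\varepsilon^2 I$) with $|\langle\hat\mu,\mu\rangle|\geq c'$ for some $c'>0$, computable in polynomial time by standard PCA.

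With $\hat\mu$ in hand I would build the distinguisher by sample splitting. Use the first half of the samples to run $\mathcal{A}$ and extract $\hat\mu$; on the independent second half compute the quadratic statistic
\[
T = \frac{2}{N}\sum_{j=N/2+1}^N \langle \Delta_j,\hat\mu\rangle^2.
\]
Since $\hat\mu$ is a unit vector independent of the second half, $E[T]=\varepsilon^2$ under the null $\Delta_j\sim N(0,\varepsilon^2 I)$ and $E[T]=\varepsilon^2(1+\langle\mu,\hat\mu\rangle^2)\geq\varepsilon^2(1+(c')^2)$ under the alternative $\Delta_j\sim P_1$. The variance of $T$ is $O(\varepsilon^4/N)$, so thresholding $T$ at $\varepsilon^2(1+(c')^2/2)$ separates the two hypotheses with constant probability, producing a polynomial-time test for \eqref{eq:sparse_mean_hypotest} and contradicting Theorem \ref{thm:BB_impossibilit}.

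The main obstacle is the covariance-closeness step: translating $TV(\tilde P,P_1)<0.4$ into operator-norm closeness of $\tilde\Sigma$ and $\varepsilon^2(I+\mu\mu^T)$. Total variation controls bounded test functions, whereas $xx^T$ is unbounded, so TV cannot be applied directly. The trick is to exploit that $P_1$ (a mixture of Gaussians of scale $\varepsilon$) and $\tilde P$ (forced to have comparable scale by TV closeness) are both sub-Gaussian: after truncating outside a ball of radius $O(\varepsilon\sqrt{d\log d})$ the quadratic becomes bounded, TV then controls its integral, and sub-Gaussian concentration makes the tail contribution negligible. The numerical constant $0.4$ is precisely the threshold at which this argument yields $c<1$ above; everything else in the reduction is clean linear algebra and concentration.
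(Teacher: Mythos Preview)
Your reduction differs substantially from the paper's. The paper does not go through the covariance at all: it restricts the adversary to Gaussians of the form $N(\mu_{\mathrm{adv}},\varepsilon^2 I)$, argues that $\mu_{\mathrm{adv}}$ cannot be close to the planted direction (otherwise the adversary would solve sparse mean estimation, contradicting Theorem~\ref{thm:BB_impossibilit} in the regime $k=N^{(1+\delta)/2}$, $k\ll\sqrt d$, $N\ll k^2$), and then lower-bounds the TV distance directly from this mean separation by $\Phi(-1/4)>0.4$. Your route---extract $\mu$ from the \emph{covariance} of $\tilde P$ via Davis--Kahan and feed it into a quadratic test---is a different mechanism and, if it worked, would address general $\tilde\Sigma$, which the paper's argument does not attempt.

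However, the covariance-closeness step you flag as the ``main obstacle'' is a genuine gap, and the truncation fix you sketch cannot close it. Truncating at radius $R=O(\varepsilon\sqrt{d\log d})$ turns $xx^\top$ into a bounded test function of size $R^2$, so the control TV gives on the second moment is of order $R^2\cdot TV(\tilde P,P_1)=O(\varepsilon^2 d\log d)$, not $c\varepsilon^2$ with an absolute $c<1$; the dimension factor is fatal in the regime $d\gg N$. More concretely, consider $\tilde P=N(0,\varepsilon^2 I)$. Projecting onto the line spanned by $\mu$, the comparison reduces to $N(0,1)$ versus $\tfrac12 N(1,1)+\tfrac12 N(-1,1)$, whose TV distance is about $0.21<0.4$; yet $\tilde\Sigma-\varepsilon^2(I+\mu\mu^\top)=-\varepsilon^2\mu\mu^\top$ has operator norm exactly $\varepsilon^2$, i.e.\ $c=1$ on the nose, and $\tilde\Sigma-\varepsilon^2 I=0$ has no preferred top eigenvector. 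So a Gaussian that is TV-close to $P_1$ need not carry any recoverable information about $\mu$ in its covariance, and your PCA extraction can fail completely. To salvage a reduction to Theorem~\ref{thm:BB_impossibilit} you would need a different way to pull a $\mu$-correlated direction out of $\tilde P$; the paper's choice is to work with the mean (under a fixed-covariance restriction) rather than the covariance.
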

\begin{proof}
The best the adversary can hope for is to predict $\Delta_{adv} \sim N(\mu_{adv},\varepsilon^2I)$, for some $\mu_{adv}$ satisfying $\|\mu-\mu_{adv}\|_2\ge \varepsilon/2$. (Indeed, if $\|\mu_{adv}-\mu\|_2<\varepsilon/2$, then the adversary would solve the sparse mean estimation hypothesis testing which is not possible in polynomial time thanks to Theorem \ref{thm:BB_impossibilit}. Indeed, to learn the mean $\mu$ with higher accuracy one would need more than $\Tilde{\Theta}(k^2)\gg N$ samples.)

Notice that the total variation distance between the $\Delta_{adv}$ and $\Delta$ is at least $\Phi(-1/4)>0.4$, where $\Phi(\cdot)$ is the cumulative density function of a standard multivariate Gaussian. 
\end{proof}

The reason why we need to assume the regime $d^{\delta}\ll N^{1+\delta}\ll d$ is to fulfill the hypothesis of the Theorem \ref{thm:BB_impossibilit} due to the log factors in Theorem \ref{thm:BB_impossibilit}.

Also note that the choice of total variation distance is not essential, we just require a distance between Gaussian distributions that is bounded away from zero if their means are $1/2$-far apart in Euclidean distance. For example, similar guarantees holds for KL-divergence or $2$-Wassertein distance.

\subsection{Soudness and Completness}
We describe our watermarking detection algorithm. The idea is similar to the one we used in the closed setting: if the text is independent from the watermark, then $G_i(x_i)$ are distributed as standard Gaussians and therefore the empirical mean concentrates around $0$. On the other hand, if the text is generated by the watermarking scheme then we have a bias towards the positive entries of $G_1,\ldots,G_N$ and consequently the empirical mean should deviate from zero.

We now describe the detection algorithm for the open source setting:
\begin{algorithm}
\caption{Watermark Detection for the Open Source Setting}
\label{algo:detection_open}
\begin{algorithmic}
\Require The text $x_1,\ldots,x_{N}$. The key: $G_1,\ldots,G_N$. Tolerance $\delta$.
\Ensure True or False.

\smallskip

\noindent Compute $Z = \frac{1}{N}\sum_{j=1}^N  G_j(x_j)$.

\smallskip

\noindent If $Z \ge \varepsilon\sqrt{2\log(1/\delta)/N}$ return True, else return False.
\end{algorithmic}
\end{algorithm}

\medskip

\begin{proposition}[Soundness]
\label{prop:soundness_open}
The watermarking scheme described in Section \ref{sec: watermarking scheme open} is sound with probability at least $1-\delta$. 
\end{proposition}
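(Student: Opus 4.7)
The plan is to mirror the proof of Proposition \ref{prop:soundness_closed} almost verbatim, replacing the Hoeffding bound for bounded Rademacher-like variables by a Gaussian tail bound for the coordinates of $G_j$. Under $H_0$, the text $x_1,\ldots,x_N$ is independent of the secret key $G_1,\ldots,G_N$. So the first step is to condition on the sequence $x_1,\ldots,x_N$ and observe that for each $j$, the coordinate $G_j(x_j)$ is just a fixed coordinate of the independent Gaussian vector $G_j\sim N(0,\varepsilon^2 I)$; thus $G_j(x_j)\sim N(0,\varepsilon^2)$, and the variables $G_1(x_1),\ldots,G_N(x_N)$ are independent across $j$ because the $G_j$'s are independent.

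Next, I would note that this makes the statistic $Z = \frac{1}{N}\sum_{j=1}^N G_j(x_j)$ a normalized sum of i.i.d.\ centered Gaussians with variance $\varepsilon^2$, hence $Z$ itself is $N(0,\varepsilon^2/N)$ conditionally on the text. The standard Gaussian upper tail bound $\mathbb{P}\{Z\ge t\}\le \exp(-Nt^2/(2\varepsilon^2))$ then yields, for $t = \varepsilon\sqrt{2\log(1/\delta)/N}$, the inequality $\mathbb{P}\{Z\ge t\}\le \delta$. Since this conditional bound does not depend on the realization of $x_1,\ldots,x_N$, integrating out the text gives the same unconditional bound, which is exactly the soundness guarantee for Algorithm \ref{algo:detection_open}.

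There is essentially no technical obstacle here; the only subtlety worth spelling out is the conditioning step, which ensures that the randomness used by the test comes entirely from the key rather than from any interaction between the key and the text. I would write the proof as a short two-line argument: invoke independence under $H_0$ to get $G_j(x_j)\sim N(0,\varepsilon^2)$ i.i.d., then apply the Gaussian tail bound at the threshold prescribed by the algorithm.
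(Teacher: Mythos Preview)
Your proposal is correct and follows essentially the same argument as the paper: under $H_0$ the independence of the text from the key makes each $G_j(x_j)\sim N(0,\varepsilon^2)$ and hence $Z\sim N(0,\varepsilon^2/N)$, after which the standard Gaussian tail bound at the algorithm's threshold gives the $\delta$ guarantee. The paper's version is slightly terser (it skips the explicit conditioning on $x_1,\ldots,x_N$ and states directly that $Z\sim N(0,\varepsilon^2/N)$), but the reasoning is identical.
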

\begin{proof}
Notice that under the null hypothesis, $x_1,\ldots,x_N$ are independent from $G_1,\ldots,G_N$, therefore the test $Z$ defined in Algorithm \ref{algo:detection_open} is the distributed as $Z\sim N(0,\varepsilon^2/N)$. By the standard estimate for the Gaussian tail,
\begin{equation*}
    \Prob{Z \ge \varepsilon t}\le e^{-t^2N/2},
\end{equation*}
which is at most $\delta$ for $t=\sqrt{2\log(1/\delta)/N}$.
\end{proof}

\noindent Let $c_0$ be the absolute constant in \cite[Proposition 2.7.1 from (b) to (e)]{vershynin2018high}. Our main result for the open source setting is the following theorem:
\begin{theorem}[Completeness]
\label{thm:completeness_open}
Let $\delta \in (0,1)$, $N\ge 2\log(1/\delta)$. Let $c_0$ as above and let $p^*_j$ denote the probability of the most likely token in the dictionary at step $j=1,\ldots, N$. If, for some $\gamma>0$ and $\varepsilon\le 1/2$,
\begin{equation}
\label{eq:main_open}
        \frac{\varepsilon^2}{1200}\frac{1}{N}\sum_{j=1}^{N}(1-p_j^{\ast}) \ge \varepsilon \sqrt{\frac{2\log(1/\delta)}{N}}+\gamma.
\end{equation}
Then, with probability at least $1- e^{-\gamma^2N/672c_0^2\varepsilon^2}$, the watermark scheme described in Section \ref{sec: watermarking scheme open} is complete.
\end{theorem}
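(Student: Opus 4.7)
The plan is to mirror the proof of Theorem~\ref{thm:completness_closed}: (i)~compute the per-step conditional bias $\mathbb{E}[G_j(x_j)\mid\mathcal{F}_{j-1}]$, with $\mathcal{F}_{j-1}=\sigma(x_1,\ldots,x_{j-1})$; (ii)~lower bound it by $\varepsilon^2(1-p_j^*)/1200$; and (iii)~apply a sub-Gaussian martingale concentration inequality.

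For (i), I would expand $\mathbb{E}[G_j(x_j)\mid\mathcal{F}_{j-1}] = \sum_i \mathbb{E}[G_j(i)\,q_j(i)\mid\mathcal{F}_{j-1}]$ and use Gaussian integration by parts. Since $q_j(i;\Delta_j)$ is smooth in $G_j$ with softmax derivatives $\partial q_j(i)/\partial G_j(k) = q_j(i)(\mathds{1}_{i=k}-q_j(k))$, Stein's identity gives $\mathbb{E}[G_j(i)\,q_j(i)\mid\mathcal{F}_{j-1}] = \varepsilon^2\mathbb{E}[q_j(i)(1-q_j(i))\mid\mathcal{F}_{j-1}]$, and summing over $i$ yields
\begin{equation*}
  \mathbb{E}[G_j(x_j)\mid\mathcal{F}_{j-1}] \;=\; \varepsilon^2\,\mathbb{E}\bigl[1-\|q_j\|_2^2 \mid \mathcal{F}_{j-1}\bigr].
\end{equation*}

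For (ii), the analogue of Lemma~\ref{lemma:nonincreasing_rearrangments} I need is $\mathbb{E}[1-\|q_j\|_2^2\mid\mathcal{F}_{j-1}] \ge (1-p_j^*)/1200$. I would start from the pointwise bound $1-\|q\|_2^2 \ge \tfrac12(1-q_{\max})$ combined with the softmax comparison $q_{\max} \le p^*/(p^* + e^{-M}(1-p^*))$, where $M$ denotes the range of $\Delta_j$ on the tokens that actually carry the mass of $p_j$. Restricting attention to an ``effective support'' of $p_j$ whose cardinality depends only on $1-p_j^*$ (not on $d$) and using a Gaussian union bound, $M$ can be taken to be $O(1)$ on an event of constant probability, which delivers the absolute constant $1/1200$.

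For (iii), set $Z_0=0$ and $Z_j = Z_{j-1} + G_j(x_j) - \mathbb{E}[G_j(x_j)\mid\mathcal{F}_{j-1}]$, a martingale in $(\mathcal{F}_j)$. Iterating the Stein identity above gives moment bounds of the form $\mathbb{E}[G_j(x_j)^{2m}\mid\mathcal{F}_{j-1}] \le (2m-1)!!\,(C\varepsilon)^{2m}$, so by Proposition~2.7.1 of \cite{vershynin2018high} the increments are sub-Gaussian with norm at most $c_0\varepsilon$. A sub-Gaussian Azuma--Hoeffding inequality then yields $Z \ge (1/N)\sum_j \mathbb{E}[G_j(x_j)\mid\mathcal{F}_{j-1}] - \gamma$ outside an event of probability $\exp(-\gamma^2 N/(672\,c_0^2\,\varepsilon^2))$; combining with (ii) and hypothesis~\eqref{eq:main_open} gives $Z\ge \varepsilon\sqrt{2\log(1/\delta)/N}$, triggering Algorithm~\ref{algo:detection_open}.

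The hard part will be step (ii): $\|G_j\|_\infty$ typically grows like $\varepsilon\sqrt{\log d}$, so the softmax can concentrate on a single token whenever $G_j$ has a large outlier coordinate, and any naive worst-case bound on the range of $\Delta_j$ costs a factor of roughly $e^{-\sqrt{\log d}}$. The event-based reduction to an effective support of size depending only on $1-p_j^*$ is what makes the lower bound dimension-free, and it is the technical heart of the proof, analogous to Lemma~\ref{lemma:nonincreasing_rearrangments} in the closed setting.
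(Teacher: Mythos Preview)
Your high-level architecture matches the paper's: establish a per-step bias lower bound, control the increments, and apply a martingale deviation inequality. Your Step~(i) via Stein's lemma is a genuinely cleaner alternative to what the paper does. The identity $\E[G(x)\mid\mathcal{F}_{j-1}]=\varepsilon^2\,\E[1-\|q\|_2^2\mid\mathcal{F}_{j-1}]$ is correct and more transparent than the paper's route, which never writes down this formula and instead attacks $\E[g_k\,q(k)]$ directly in Appendix~\ref{sec:appendix_prop_fundamental_bias} by symmetrising in $|g_k|$, invoking the FKG inequality, and then controlling $\alpha_k/(\alpha_k+1)^2$ via Markov and Hoeffding on the random denominator.

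However, Step~(ii) has a real gap. Your reduction rests on an ``effective support'' of cardinality depending only on $1-p_j^*$, but no such support need exist: take $p$ uniform on $[d]$, so $1-p^*\approx 1$, yet every subset carrying a fixed fraction of the mass has cardinality $\Theta(d)$, and the range $M$ of $\Delta$ over it is $\Theta(\varepsilon\sqrt{\log d})$, not $O(1)$. In that regime the right mechanism is concentration of $\sum_{k}p(k)e^{\Delta(k)}$ around its mean, not a union bound on a small support; the paper's proof of Proposition~\ref{prop:fundametal_bias} handles exactly this dichotomy through the event~\eqref{eq:event_E}, splitting into the cases $\max_i a_i\ge 1/4$ and $\max_i a_i<1/4$. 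Your sketch covers neither this case analysis nor the matching of $q_{\max}$ to the argmax of $p$, so Step~(ii) as written does not go through.

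Step~(iii) also does not match the paper and is under-justified. The constant $c_0$ in the statement is, by the sentence preceding the theorem, the $\psi_1$ constant from Vershynin's Proposition~2.7.1 (the sub-\emph{exponential} equivalences), and the exponent $672c_0^2\varepsilon^2$ arises precisely from the sub-exponential bound $\|G(x)-\E G(x)\|_{\psi_1}\le 8.4\sqrt{10}\,\varepsilon$ of Lemma~\ref{psi1_estimate} fed into the Bernstein-type Azuma inequality. Your claim that iterating Stein delivers Gaussian-type moment bounds $(2m-1)!!(C\varepsilon)^{2m}$ is not obvious: each application of Stein produces derivatives of softmax that do not shrink, and the resulting recursion does not close on a sub-Gaussian profile without further work. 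The paper sidesteps this entirely by proving a $\psi_1$ bound via Cauchy--Schwarz (Appendix~\ref{sec:proof_psi1_estimate}), which is what you should aim for if you want to recover the stated constant.
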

We remark that the assumption \eqref{eq:main_open} is analogous to assumption \eqref{eq:main_standard} and it is somewhat necessary as explained in the Remark \ref{remark:minimal_entropy}.

We made effort to make all constants explicit, but we opt for a more simplified analysis rather than optimizing the value of the constants. Finally, the assumption on $\varepsilon\le 1/2$ is for technical convenient and could be replace by any absolute constant if necessary.

Before we proceed to the prove, we require some preliminary results. Recall the softmax functions \eqref{eq:softmax_rule} and \eqref{eq:vector_q_open}.

\begin{proposition}
\label{prop:fundametal_bias}
Let $G=(g_1,\ldots,g_d) \sim N(0,\varepsilon^2I)$, for some $\varepsilon \le 1/2$ and $x \in [d]$ be a token sampled according to the watermarked softmax rule \eqref{eq:vector_q_open}. Set $p^{\ast}$ to be the non-increasing rearrangement of the unwatermarked probability distribution $p$ \eqref{eq:softmax_rule} of the token $x$. Then
\begin{equation*}
    \E G(x) \ge  \frac{\varepsilon^2}{1200}(1-p^{\ast}(1)).
\end{equation*}
\end{proposition}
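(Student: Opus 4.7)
The plan is to express $\E G(x)$ in closed form via Gaussian integration by parts and then lower bound the resulting quantity in terms of $1-p^*(1)$.

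\textbf{Step 1: Stein's identity.} Condition on the discrete randomness $r,\mu$, so that $G\sim N(0,\varepsilon^2 I_d)$ is independent of everything else. Writing $q(i) = p(i) e^{G(i) + r\varepsilon\mu(i)}/Z(G)$, direct differentiation yields $\partial_{G(i)} q(i) = q(i)(1-q(i))$. Gaussian integration by parts then gives $\E[G(i) q(i)\mid r,\mu] = \varepsilon^2 \E[q(i)(1-q(i))\mid r,\mu]$. Summing over $i\in[d]$ with the identity $\sum_i q(i)(1-q(i)) = 1 - \|q\|_2^2$, and unconditioning, produces
\[
\E G(x) \;=\; \varepsilon^2\,\E\bigl[1 - \|q\|_2^2\bigr].
\]

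\textbf{Step 2: Lower bound on $1-\|q\|_2^2$.} It remains to show $\E[1 - \|q\|_2^2] \ge (1-p^*(1))/1200$. I would introduce a constant-probability good event $E$ on which the perturbation is mild, for instance the intersection of $\{|G(i^*)| \le \varepsilon\}$ and $\{\sum_{j\ne i^*} p(j) e^{\Delta(j)} \le C(1-p^*(1))\}$, where $i^* = \arg\max_i p(i)$. Since $\E Z \le e^{\varepsilon^2/2}\cosh(\varepsilon) \le 2$ for $\varepsilon \le 1/2$, Markov's inequality and a direct Gaussian calculation give $\Pr(E)\ge c_2$ for a universal $c_2$. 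On $E$ both $Z$ and $e^{\Delta(i^*)}$ are bounded by universal constants, so $q(i^*)$ is comparable to $p^*(1)$ and $1-q(i^*) = (\sum_{j\ne i^*} p(j) e^{\Delta(j)})/Z$ is comparable to $1-p^*(1)$ up to universal factors. Invoking the pairwise bound $1 - \|q\|_2^2 \ge 2 q(i^*)(1-q(i^*))$ then yields $1 - \|q\|_2^2 \ge c_1 (1-p^*(1))$ on $E$, and the product $c_1 c_2$ can be tuned to $1/1200$.

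\textbf{Main obstacle.} Making the bound $2q(i^*)(1-q(i^*)) \gtrsim 1-p^*(1)$ hold uniformly in $p^*(1)$ is the hardest part. In the regime $p^*(1) \ge 1/2$ the analysis above works cleanly, since $q(i^*)$ is bounded below by a universal constant on the good event. When $p^*(1) < 1/2$, however, $q(i^*)$ may itself be of order $p^*(1)$, and the pairwise bound only gives $\gtrsim p^*(1)(1-p^*(1))$, which is weaker than $1-p^*(1)$ when $p^*(1)\ll 1$. In that regime I would instead use $1-\|q\|_2^2 \ge 1 - q^*$ and argue that a diffuse $p$ forces $q^*$ to remain bounded away from $1$ (because on the good event no single coordinate of $\Delta$ can dominate $Z$), so that $1-\|q\|_2^2$ directly dominates a universal constant, which in turn dominates $(1-p^*(1))/c$. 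A case split at $p^*(1)=1/2$ combines both regimes and the loose constant $1200$ absorbs the resulting constants.
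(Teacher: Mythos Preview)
Your Step~1 is correct and is a genuinely cleaner derivation than the paper's. Stein's lemma applied coordinatewise to the softmax gives the exact identity
\[
\E G(x)=\varepsilon^2\,\E\bigl[1-\|q\|_2^2\bigr]=\varepsilon^2\,\E\sum_k q(k)\bigl(1-q(k)\bigr),
\]
which the paper never writes down. Instead, the paper symmetrizes each term $\E[g_k\,e^{g_k}/(\alpha_k+e^{g_k})]$ in $g_k\mapsto -g_k$, invokes the FKG inequality to decouple $|g_k|$ from the fraction, and only after several numerical steps arrives at a lower bound of the form $\varepsilon^2\sum_k \alpha_k/(\alpha_k+1)^2$, which is essentially a lossy version of your $\sum_k q(k)(1-q(k))$. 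So your Step~1 buys a shorter and loss-free reduction to the second step.

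Step~2, however, has two gaps. First, in the regime $p^*(1)\ge 1/2$ your event $E$ only contains an \emph{upper} bound on $\sum_{j\ne i^*}p(j)e^{\Delta(j)}$; to conclude that $1-q(i^*)$ is comparable to $1-p^*(1)$ you also need a \emph{lower} bound on that sum. This is easy to add (e.g.\ via $e^{\Delta(j)}\ge \one_{\{\Delta(j)\ge 0\}}$ and a Hoeffding-type bound, exactly as in the paper's event~$\mathcal{E}$), but it is missing as written. Second, and more seriously, in the diffuse regime $p^*(1)<1/2$ your plan to bound $1-q^*$ from below ``because on the good event no single coordinate of $\Delta$ can dominate $Z$'' does not come with a candidate good event, and the natural choice $\{\max_i|\Delta(i)|\le C\}$ has probability vanishing with $d$. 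The argmax of $q$ is random and need not be $i^*$, so you are implicitly asking for a uniform-in-$k$ lower bound on $1-q(k)$, which a single Markov bound on $Z$ cannot deliver. One clean fix: since every $p(i)\le 1/2$, partition $[d]$ into two sets $A,B$ with $\sum_{i\in A}p(i),\sum_{i\in B}p(i)\in[1/4,3/4]$, note that $1-\|q\|_2^2\ge 2\bigl(\sum_{A}q\bigr)\bigl(\sum_{B}q\bigr)$, and then two-sidedly bound the \emph{aggregated} sums $\sum_{A}p(i)e^{\Delta(i)}$ and $\sum_{B}p(i)e^{\Delta(i)}$ on a constant-probability event; this avoids any union bound over coordinates.

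For comparison, the paper's second step does not split into cases. It works with the per-coordinate event $\mathcal{E}_k=\{\alpha_k\asymp(1-p(k))/p(k)\}$ and observes that on $\mathcal{E}_k$ one has $\alpha_k/(1+\alpha_k)^2\ge \tfrac{1}{1+\alpha_k}\cdot\tfrac{1-p^*(1)}{4}$; summing over $k$ and using $\sum_k 1/(1+\alpha_k)\gtrsim 1$ handles the diffuse and concentrated regimes simultaneously. Your Stein identity would let you run the same per-coordinate argument directly on $\sum_k q(k)(1-q(k))$ with less loss, once the good event is stated with the required two-sided control.
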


\begin{lemma}
\label{psi1_estimate}
Let $G=(g_1,\ldots,g_d)\sim N(0,\varepsilon^2I)$, for some $\varepsilon\le1/2$ and $x \in [d]$ be a token sampled according to the watermarked softmax rule \eqref{eq:vector_q_open}. Then the random variable $G(x)$ is sub-exponential with $\|G(x)\|_{\psi_1}\le 2.8\sqrt{10}\varepsilon$ and  $\|G(x)-\E G(x)\|_{\psi_1}\le 8.4\sqrt{10}\varepsilon$.
\end{lemma}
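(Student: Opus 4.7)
My plan is to bound the even moments $\E G(x)^{2k}$ for all $k$ by a Gaussian integration-by-parts recursion and then extract the $\psi_1$-norm bound from these moments. Since conditional on $G$ the token $x$ equals $i$ with probability $q(i)$, we have $\E G(x)^{2k} = \sum_i \E[q(i) G(i)^{2k}]$. Because $G(i) \sim N(0, \varepsilon^2)$ is independent of $\{G(j)\}_{j \ne i}$, Stein's identity applies in the coordinate $G(i)$: $\E[G(i) f(G)] = \varepsilon^2 \E \partial_{G(i)} f(G)$. The softmax derivative is clean, $\partial_{G(i)} q(i) = q(i)(1-q(i))$, so applying the identity with $f(G) = G(i)^{2k-1} q(i)$ yields
\begin{equation*}
\E[G(i)^{2k} q(i)] = \varepsilon^2 (2k-1)\,\E[G(i)^{2k-2} q(i)] + \varepsilon^2 \E[G(i)^{2k-1} q(i)(1-q(i))].
\end{equation*}

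The offending odd-moment term is handled by AM-GM: $|G(i)|^{2k-1} \le \tfrac{1}{2}(G(i)^{2k-2} + G(i)^{2k})$ together with $q(i)(1-q(i)) \le q(i)$. This lets me absorb the $G(i)^{2k}$ contribution into the left-hand side and obtain the contractive recurrence
\begin{equation*}
(1 - \varepsilon^2/2)\,\E[G(i)^{2k} q(i)] \le \varepsilon^2 (2k - 1/2)\,\E[G(i)^{2k-2} q(i)].
\end{equation*}
The hypothesis $\varepsilon \le 1/2$ is essential here: it ensures $1 - \varepsilon^2/2 \ge 7/8$ so the iteration is well-defined and contractive. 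Iterating from the base case $\E G(i)^0 q(i) = \E q(i)$ and summing over $i$ via $\sum_i \E q(i) = 1$ gives $\E G(x)^{2k} \le C^k \varepsilon^{2k} k!$ for an explicit constant $C$ (the computation above gives $C = 16/7$, but the claimed bound allows room to spare).

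To extract the $\psi_1$-norm I would combine this moment bound with the elementary inequality $|G(x)|/K \le G(x)^2/(2K^2) + 1/2$, which reduces the problem to bounding $\E \exp(G(x)^2/(2K^2))$. Expanding as a Taylor series, the factorials $k!$ from the moment bound cancel cleanly against the $k!$ from the exponential, producing a geometric series in $C\varepsilon^2/(2K^2)$ that is summable once $K$ is a sufficiently large absolute multiple of $\varepsilon$. Numerical tracking yields the stated $2.8\sqrt{10}\,\varepsilon$. For the second claim, I apply the triangle inequality $\|G(x) - \E G(x)\|_{\psi_1} \le \|G(x)\|_{\psi_1} + |\E G(x)|/\log 2$; Jensen gives $|\E G(x)| \le \E |G(x)| \le (\log 2)\|G(x)\|_{\psi_1}$, so the centered norm is bounded by $2\|G(x)\|_{\psi_1}$, comfortably inside the claimed $8.4\sqrt{10}\,\varepsilon = 3 \cdot 2.8\sqrt{10}\,\varepsilon$.

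The main obstacle is purely bookkeeping of constants rather than conceptual. The naive bound $|G(x)| \le \|G\|_\infty$ is too crude and would introduce a spurious $\sqrt{\log d}$ factor; what saves us is that the softmax structure concentrates $x$ on large coordinates of $G$ only weakly when $\varepsilon \le 1/2$, a fact that is captured exactly by the Stein identity $\partial_{G(i)} q(i) = q(i)(1-q(i))$ and the contractivity of the resulting recursion. The AM-GM step, the Stirling-type estimate in converting moments to the $\psi_1$-norm, and the relation between $\psi_2$- and $\psi_1$-norms each introduce loose constants, which together account for the slack in the stated bounds.
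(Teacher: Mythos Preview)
Your approach is correct and takes a genuinely different route from the paper. The paper bounds the moment generating function directly: it writes $\E e^{|G(x)|/\tau} \le e^{\varepsilon/\sqrt{k}}\sum_k \E\bigl[e^{g_k/\tau}\,q(k)\bigr]$, applies Cauchy--Schwarz in the variable $g_k$ to split off $(\E e^{2g_k/\tau})^{1/2}$ from $(\E q(k)^2)^{1/2}$, and then establishes a sub-claim $(\E_k q(k)^2)^{1/2}\le 2.2\,\E_k q(k)$ via explicit estimates on $\E\,e^{g_k}/(\alpha_k+e^{g_k})$ and $\E\,e^{2g_k}/(\alpha_k+e^{g_k})^2$; summing $\sum_k \E q(k)=1$ closes the bound. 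Your Stein-identity recursion is cleaner and more structural: it uses the exact softmax derivative $\partial_{g_i}q(i)=q(i)(1-q(i))$ rather than only boundedness, and the AM--GM absorption of the odd moment bypasses the Cauchy--Schwarz step and its sub-claim altogether. A bonus is that your bound $\E G(x)^{2k}\le (16/7)^k\varepsilon^{2k}k!$ is actually a \emph{sub-Gaussian} estimate (strictly stronger than the sub-exponential conclusion), and honest constant-tracking gives $\|G(x)\|_{\psi_1}\lesssim 2.6\,\varepsilon$, well inside the claimed $2.8\sqrt{10}\,\varepsilon$. The paper's argument, in exchange, never differentiates $q$ and so would adapt more readily if the softmax were replaced by another bounded sampling rule. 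The centering step is handled the same way in both proofs (triangle inequality plus $|\E G(x)|\le (\log 2)\|G(x)\|_{\psi_1}$); your factor-of-$2$ bound is in fact tighter than the paper's factor-of-$3$.
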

We leave the proof of Proposition \ref{prop:fundametal_bias} to Appendix \ref{sec:appendix_prop_fundamental_bias} and the proof of Lemma \ref{psi1_estimate} to Appendix \ref{sec:proof_psi1_estimate}. 

\begin{proof}[Proof of Theorem \ref{thm:completeness_open}]
Assume that the text is watermarked. Set $Z_{0}\coloneqq 0$ and notice that
\begin{equation*}
Z_{j}\coloneqq Z_{j-1} +  G_k(x_j) - \E \big[ G_j(x_j)|x_1,\ldots,x_{j-1}\big],
\end{equation*}
is a martingale. By Lemma \ref{psi1_estimate}, the increments $$Y_j \coloneqq [Z_{j}-Z_{j-1}]|x_{1},\ldots,x_{j-1},$$ are sub-exponential. In addition to this,  \cite[Proposition 2.7.1]{vershynin2018high} implies that there exists a constant $c_0>0$ for which all the sub-exponential random variables $Y_j$ satisfy
\begin{equation*}
\E e^{\lambda |Y_j|}\le e^{\lambda^2c_0^2\|Y_j\|_{\psi_1}^2} \quad \text{for all} \quad |\lambda|\le \frac{1}{c_0 \|Y_j\|_{\psi_1}}.
\end{equation*}
It follows from the sub-exponential version of the Azuma-Hoeffding's inequality \cite[Theorem 2.3]{wainwright2019high} and Proposition \ref{prop:fundametal_bias} that 
\begin{equation*}
\begin{split}
&\mathbb{P}\bigg\{Z\ge \frac{\varepsilon^2}{1200}\frac{1}{N}\sum_{j=1}^N(1-p_j^{\ast})-t\bigg\}\ge 1-e^{-t^2N/2c_0^284\varepsilon^2}.
\end{split}
\end{equation*}
The result follows from $t=\gamma/2$ combined with the assumption \eqref{eq:main_open}.

\end{proof}

\appendix 

\section{Proof of Lemma \ref{lemma:nonincreasing_rearrangments}} \label{s: proof maxima}
We start with the proof of the lower bound. By the Hoeffding-Chernoff inequality for the hyper-geometric distribution (see \cite[Section 27.5]{frieze2015introduction}), for each $i=2,\ldots,d/10$, we have:
\begin{equation}
\label{eq:balanced_condition}
\big|I\cap [10i]\big|\ge i \quad \big|I^c\cap [10i]\big|\ge i,
\end{equation}
with probability at least $1-2e^{-8i/5}$. Notice that for any $i$ for which the condition \eqref{eq:balanced_condition} holds, the monotonicity of $p$ implies that at least $i$ elements of $\{p(i)\}_{i\in I}$ are greater or equal to $p_{10i}$. By definition of $w$, this implies that $w_i\ge p_{10i}$. Similarly, we have $\overline{w}_i\ge p_{10i}$ which implies $\min\{w_i,\overline{w}_i\}\ge p_{10i}$. 

Since for each $i=2,\ldots,d/10$, the event \eqref{eq:balanced_condition} holds with probability at least $1-2e^{-8i/5}\ge 1/2$, we have 
\begin{equation*}
    \E \min\{w_i,\overline{w}_i\}\ge \frac{p_{10i}}{2}.
\end{equation*}
Additionally, with probability $1/2$ we have $1\in I$ and $2\in I^c$ (vice-versa), so
\begin{equation*}
    \E\min\big\{w_1,\overline{w}_1\big\}\ge \frac{p_2}{2}.
\end{equation*}
We conclude that 
\begin{equation}
\label{eq:sparse_sum}
\E\sum_{i=1}^{d/2}\min\{w(i),\overline{w}(i)\} \ge \frac{1}{2}\big(p_2+p_{10}+p_{20}+\cdots + p_d\big)
\end{equation}
To complete the full sum (from $i=2$ to $d$), notice that
\begin{equation}
\label{eq:filling_sum}
8p_2\ge p_3+\cdots+p_9 \quad \text{and} \quad 10p_{10i}\ge p_{10i}+\cdots+p_{10i+9}.
\end{equation}
Combining \eqref{eq:sparse_sum} and \eqref{eq:filling_sum}, we obtain
\begin{equation*}
\E\sum_{i=1}^{d/2}\min\{w_i,\overline{w}_i\} \ge \frac{1}{20}\sum_{i=2}^d p_i = \frac{1}{20}(1-p_1).
\end{equation*}
The proof of the upper bound is shorter. In fact, almost surely we have
\begin{equation*}
\begin{split}
&\sum_{i=1}^d\min\big\{w_i,\overline{w}_i\big\} \le \min\bigg\{\sum_{i=1}^{d/2}w_i,\sum_{i=1}^{d/2}\overline{w}_i\bigg\}\\
&\le \min\bigg\{\sum_{i\in I}p_i,\sum_{i\in I^c}p_i\bigg\} \le \sum_{i=2}^dp_i.
\end{split}
\end{equation*}
where the last inequality holds because either $I$ or $I^c$ does not contain the element $1$.
The proof is complete.

\section{Proof of Proposition \ref{prop:fundametal_bias}}
\label{sec:appendix_prop_fundamental_bias}
To start, we focus on the term 
\begin{equation*}
\E \bigg[g_k \frac{e^{\Delta(k)+\ell(k)}}{e^{\ell(1)+\Delta(1)}+\cdots + e^{\ell(d)+\Delta(d)}}\bigg] \ge e^{-\varepsilon/\sqrt{k}} \E \bigg[g_k \frac{e^{g_k}}{e^{\ell(1)+g_1-\ell(k)}+\cdots + e^{\ell(d)+g_d-\ell(k)}}\bigg].
\end{equation*}
Clearly, $e^{-\varepsilon/\sqrt{k}}\ge e^{-1}$ as $k\ge 1$. Next, denote by $\E_j$  the expectation with respect to the randomness of $g_j$ only. By the iterated law of expectation, we compute the expectation term in the right-hand side by
\begin{equation*}
 \E_{1,\ldots,k-1,k+1,\ldots,d} \bigg[\E_k \bigg[g_k\frac{e^{g_k}}{e^{\ell(1)+g_1-\ell(k)}+\ldots + e^{\ell(d)+g_d-\ell(k)}}\bigg]\bigg],
\end{equation*}
and by independence we may treat $\sum_{i\neq k}e^{\ell(i)-\ell(k)+g_i}$ as a constant for the inner expectation. Thus, let us define $\alpha_k \coloneqq \sum_{i\neq k}e^{\ell(i)-\ell(k)+g_i}$ and write
\begin{equation*}
\begin{split}
&\E_k \bigg[g_k \frac{e^{g_k}}{\alpha_k + e^{g_k}}\bigg] = \frac{1}{2}\E_{k}\bigg[|g_k| \frac{e^{|g_k|}}{\alpha_k + e^{|g_k|}} - |g_k| \frac{e^{-|g_k|}}{\alpha_k + e^{-|g_k|}}\bigg]\\
&= \frac{\alpha_k}{2}\E_k \bigg[|g_k| \bigg(\frac{e^{|g_k|}-e^{-|g_k|}}{\alpha_k^2+\alpha_k(e^{|g_k|}+e^{-|g_k|})+1}\bigg)\bigg].
\end{split}
\end{equation*}
Next, notice that the function $f: [0,\infty)\rightarrow \mathbb{R}$
\begin{equation*}
    f(x) \coloneqq \frac{e^{x}-e^{-x}}{\alpha_k^2+\alpha_k(e^{x}+e^{-x})+1},
\end{equation*}
is increasing and non-negative. Invoking the FKG inequality, we have that
\begin{equation*}
\begin{split}
&\E_k \bigg[g_k \frac{e^{g_k}}{\alpha_k + e^{g_k}}\bigg] \ge \frac{\varepsilon\alpha_k}{\sqrt{2\pi}} \E_k\bigg[\frac{e^{|g_k|}-e^{-|g_k|}}{\alpha_k^2+\alpha_k(e^{|g_k|}+e^{-|g_k|})+1}\bigg] \\
&\ge \frac{\varepsilon\alpha_k}{\sqrt{2\pi}} \left(\frac{e^{\varepsilon c}-e^{-\varepsilon c}}{\alpha_k^2+\alpha_k(e^{\varepsilon c}+e^{-\varepsilon c})+1}\right)\mathbb{P}\{|g_k|\ge \varepsilon c\}.
\end{split}
\end{equation*}
By the standard tail estimate for Gaussians, for the choice of $c=1/2$, it follows that $\Prob{|g_k|\ge \varepsilon/2}\ge 1/2$. Recall that $\varepsilon\le 1/2$ which implies that $e^{\varepsilon/2}+e^{-\varepsilon/2}\le 2.1$, thus 
\begin{equation*}
   \alpha_k^2+\alpha_k(e^{\varepsilon/2}+e^{-\varepsilon/2})+1 \le \frac{2.06}{2}\left(\alpha_k^2 + 2\alpha_k+1\right)\le 1.05\left(\alpha_k^2 + 2\alpha_k+1\right),
\end{equation*}
and 
\begin{equation*}
\begin{split}
&\E_k g_k \bigg[\frac{e^{g_k}}{\alpha_k + e^{g_k}}\bigg] \ge \frac{\varepsilon\alpha_k}{\sqrt{8\pi}} \left(\frac{e^{\varepsilon/2}-e^{-\varepsilon/2}}{\alpha_k^2+\alpha_k(e^{\varepsilon/2}+e^{-\varepsilon/2})+1}\right)\\
&\ge \varepsilon\frac{e^{\varepsilon/2}-e^{-\varepsilon/2}}{1.05\sqrt{8\pi}} \left(\frac{\alpha_k}{\alpha_k^2+2\alpha_k+1}\right)\\
&\ge \frac{\varepsilon^2}{1.05\sqrt{8\pi}} \left(\frac{\alpha_k}{\alpha_k^2+2\alpha_k+1}\right) = \frac{\varepsilon^2}{1.05\sqrt{8\pi}} \frac{\alpha_k}{(\alpha_k+1)^2}.
\end{split}
\end{equation*}
It remains to estimate (from below)
\begin{equation}
\label{eq: second_main_term}
   \frac{\varepsilon^2}{1.05\sqrt{8\pi}}\sum_{k=1}^d \E_{1,\ldots,k-1,k+1,\ldots,d} \bigg[\frac{\alpha_k}{(\alpha_k+1)^2}\bigg].
\end{equation}
Or equivalently (up to the multiplicative constant in front),
\begin{equation*}
\begin{split}
\sum_{k=1}^d \E_{1,\ldots,k-1,k+1,\ldots,d} \bigg[\frac{1}{(\alpha_k+1)} - \frac{1}{(\alpha_k+1)^2}\bigg].
\end{split}
\end{equation*}
Recalling that $\alpha_k = \sum_{i\neq k}e^{\ell(i)-\ell(k)+g_i}$, suppose that there is a non-empty event $\mathcal{E}$ for which both conditions below hold simultaneously
\begin{equation}
\label{eq:event_E}
    \sum_{i\neq k}e^{\ell(i)-\ell(k)+g_i} \le 4.55 \sum_{i\neq k}e^{\ell(i)-\ell(k)} \quad \text{and} \quad \sum_{i\neq k}e^{\ell(i)-\ell(k)+g_i}\ge \frac{1}{4}\sum_{i\neq k}e^{\ell(i)-\ell(k)}.
\end{equation}
By the second estimate in \eqref{eq:event_E}, we have that
\begin{equation*}
    \frac{1}{\alpha_k}=\frac{e^{\ell(k)}}{\sum_{i\neq k}e^{\ell(i)+g_i}} \le 4\frac{e^{\ell(k)}}{\sum_{i\neq k}e^{\ell(i)}}  = \frac{4p(k)}{1-p(k)},
\end{equation*}
and then 
\begin{equation*}
    \frac{1}{(1+\alpha_k)^2}\le \frac{1}{(1+\alpha_k)(1+(1-p(k))/4p(k))}.
\end{equation*}
Consequently, 
\begin{equation*}
    \frac{1}{1+\alpha_k}-  \frac{1}{(1+\alpha_k)^2} \ge \bigg(\frac{1}{1+\alpha_k}\bigg)\frac{1-p(k)}{4p(k)+1-p(k)}\ge \bigg(\frac{1}{1+\alpha_k}\bigg)\frac{1-p^{\ast}(1)}{4}.
\end{equation*}
Finally, notice that the first estimate in \eqref{eq:event_E} implies that
\begin{equation*}
\begin{split}
& \sum_{k=1}^d \frac{1}{1+\alpha_k} = \sum_{k=1}^d \frac{e^{\ell(k)}}{e^{\ell(k)}+4.55\sum_{i\neq k}e^{\ell(i)}} \\
&\ge \frac{1}{4.55}\sum_{k=1}^d \frac{e^{\ell(k)}}{e^{\ell(k)}+\sum_{i\neq k}e^{\ell(i)}} = \frac{1}{4.55} \ge 0.2.
\end{split}
\end{equation*}
Since \eqref{eq: second_main_term} is non-negative, we have that
\begin{equation*}
\sum_{k=1}^d \E_{1,\ldots,k-1,k+1,\ldots,d} \bigg[\frac{\alpha_k}{(\alpha_k+1)^2}\bigg] \ge 0.05 (1-p^{\ast}(1))\mathbb{P}\{\mathcal{E}\}
\end{equation*}
All it remains is to prove that $\mathbb{P}\{\mathcal{E}\}$ is bounded away from zero. To this end, notice that by Markov's inequality
\begin{equation*}
   \Prob{\alpha_k \le 4\E\alpha_k} \ge \frac{3}{4}.
\end{equation*}
Next, notice that for every $i\neq k$,
\begin{equation*}
  \frac{e^{\ell(i)}}{\sum_{i\neq k}e^{\ell(i)}}e^{g_i}\ge \frac{e^{\ell(i)}}{\sum_{i\neq k}e^{\ell(i)}} \mathds{1}_{g_i\ge 0}\eqqcolon a_i \mathds{1}_{g_i\ge 0}.
\end{equation*}
The random variable $Y\coloneqq\sum_{i\neq k}a_i\mathds{1}_{g_i\ge 0}$ is the sum of independent non-negative random variables $Y_i$, where $Y_i\in [0,a_i]$. Notice that $\sum_{i\neq k}a_i=1$, thus we split into two cases. If there exists an $a_i\ge 1/4$ then with probability $1/2$, $Y\ge 1/4$. On the other hand, if $a_i\le 1/4$ for all $i\neq k$ then by Hoeffding's inequality
\begin{equation*}
    \Prob*{Y\ge \frac{1}{2} - t }\ge 1- e^{-32t^2}.
\end{equation*}
Setting $t=1/4$, we obtain that
\begin{equation*}
    \Prob*{Y\ge \frac{1}{4}}\ge 1-e^{-2}.
\end{equation*}
We conclude that $\Probevent{\mathcal{E}} \ge \min\{1/2-e^{-2},3/4-1/2\} = 1/4$, which finishes the proof.

\section{Proof of Lemma \ref{psi1_estimate}}
\label{sec:proof_psi1_estimate}
Since $\|\cdot\|_{\psi_1}$ is a norm, we have that 
\begin{equation*}
    \|G(x) - \E G(x)\|_{\psi_1} \le \|G(x)\|_{\psi_1} + \|\E G(x)\|_{\psi_1} =\|G(x)\|_{\psi_1} + |\E G(x)| \le 3\|G(x)\|_{\psi_1},
\end{equation*}
where the last step follows from 
\begin{equation*}
\left|\frac{1}{\|G(x)\|_{\psi_1}}\E G(x)\right| \le \left|1+\frac{1}{\|G(x)\|_{\psi_1}}\E G(x)\right|\le \E e^{|G(x)|/\|G(x)\|_{\psi_1}} \le 2.
\end{equation*}
The proof boils down to showing that for some well-chosen $\tau$ (small as possible)
\begin{equation*}
    \E e^{|G(x)|/\tau} \le e^{\varepsilon/\sqrt{k}} \sum_{k=1}^d \E\bigg[ e^{g_k/\tau}\frac{e^{g_k}}{e^{\ell(1)-\ell(k)+g_1}+\ldots+e^{\ell(d)-\ell(k)+g_d}}\bigg]  \le 2,
\end{equation*}
which implies that $\|G(x)\|_{\psi_1}\le \tau$. We proceed similarly as in Proposition \ref{prop:fundametal_bias}. Let $\alpha_k \coloneqq \sum_{i\neq k}e^{\ell(i)-\ell(k)+g_i}$ and notice that by Cauchy-Schwarz inequality,
\begin{equation*}
\begin{split}
&\sum_{k=1}^d \E_k \bigg[e^{g_k/\tau}\frac{e^{g_k}}{e^{\ell(1)-\ell(k)+g_1}+\ldots+e^{\ell(d)-\ell(k)+g_d}}\bigg]\\
&\le \sum_{i=1}^d (\E_k e^{2g_k/\tau})^{1/2} \left(\E_k\bigg[\frac{e^{2g_k}}{(\alpha_k+e^{g_k})^2}\bigg]\right)^{1/2}.
\end{split}
\end{equation*}
We claim that 
\begin{equation*}
 \left(\E_k\bigg[\frac{e^{g_k}}{(\alpha_k+e^{g_k})^2}\bigg]\right)^{1/2} \le 2.2 \E_k\bigg[\frac{e^{\ell(k)+g_k}}{e^{\ell(1)+g_1}+\ldots+e^{\ell(d)+g_d}}\bigg].
\end{equation*}
If the claim is true then by the law of iterated expectation
\begin{equation*}
\E e^{|G(x)|/\tau} \le 2.2e(\E^{2g/\tau})^{1/2}\E\bigg[\sum_{k=1}^d \frac{e^{\ell(k)+g_k}}{e^{\ell(1)+g_1}+\ldots+e^{\ell(d)+g_d}}\bigg] \le 6e^{\varepsilon^2/\tau^2}.
\end{equation*}
Choosing $\tau \ge \sqrt{10}\varepsilon$, we reach the estimate $\E e^{|G(x)|/\tau} \le 6.7$. We would like to replace the constant $6.7$ by $2$ in the right-hand side. To this end, notice that for the constant $a=\log 6.7/\log 2>1$, the function $f(x)=x^{1/a}$ is concave and then Jensen inequality implies that
\begin{equation*}
\E e^{|G(x)|/a\tau} = \E \big[(e^{|G(x)|/\tau})^{1/a}\big] \le \left(\E e^{|G(x)|/\tau}\right)^{1/a}\le (6.7)^{1/a} =2.
\end{equation*}
Thus setting $\tau = 2.8\sqrt{10}\varepsilon$ concludes the proof.
We now proceed to prove the claim. At one hand,
\begin{equation}
\label{eq:without_square}
\begin{split}
&\E\bigg[\frac{e^{g_k}}{\alpha_k+e^{g_k}}\bigg] = \frac{1}{2}\E \bigg[\frac{e^{-|g_k|}}{\alpha_k+e^{-|g_k|}}\bigg] + \frac{1}{2}\bigg[\frac{e^{|g_k|}}{\alpha_k+e^{|g_k|}}\bigg] \\
&\ge \frac{1}{\alpha_k+1}\left(\frac{1}{2}\E e^{-|g_k|} + \frac{1}{2}\right)\\
&\ge \frac{1}{\alpha_k+1} \left(\frac{0.95}{2}e^{-2\varepsilon}+\frac{1}{2}\right) \quad \text{(as $\Prob{|g_k|\ge 2\varepsilon}\ge 0.95$ )}\\
&\ge \frac{0.67}{\alpha_k+1} \quad \text{(as $\varepsilon\le 1/2$)}.
\end{split}
\end{equation}
On the other hand, 
\begin{equation}
\label{eq:with_square}
\begin{split}
&\E\bigg[\frac{e^{2g_k}}{(\alpha_k+e^{g_k})^2}\bigg] = \frac{1}{2}\left(\E \bigg[\frac{e^{2|g_k|}}{(\alpha_k+e^{|g_k|})^2}\bigg]+ \E \bigg[\frac{e^{-2|g_k|}}{(\alpha_k+e^{-|g_k|})^2}\bigg] \right)\\
&\le \frac{1}{2}\left(2e^{2\varepsilon^2}\frac{1}{(\alpha_k+1)^2} + \E \bigg[\frac{e^{-2|g_k|}}{(\alpha_k+e^{-|g_k|})^2}\bigg] \right)\\
&= \frac{1}{2}\left(2e^{2\varepsilon^2}\frac{1}{(\alpha_k+1)^2} + \E \bigg[\frac{1}{(\alpha_ke^{|g_k|}+1)^2}\bigg] \right)\\
&= \left(e^{2\varepsilon^2}\frac{1}{(\alpha_k+1)^2} + \E \bigg[\frac{1}{2(\alpha_ke^{|g_k|}+1)^2}\bigg] \right)\\
&=\frac{1}{(\alpha_k+1)^2}\left(e^{2\varepsilon^2} + \frac{1}{2}\right)\\
&\le \frac{1}{(\alpha_k+1)^2}(\sqrt{e}+\frac{1}{2}) \quad \text{(as $\varepsilon\le 1/2$)}.
\end{split}
\end{equation}
Putting together \eqref{eq:with_square} and \eqref{eq:without_square} and the fact that $\varepsilon<1/2$, we obtain that 
\begin{equation*}
\left(\E\bigg[\frac{e^{2g_k}}{(\alpha_k+e^{g_k})^2}\bigg]\right)^{1/2}\le 2.18 \E\bigg[\frac{e^{g_k}}{\alpha_k+e^{g_k}}\bigg] = 2.18\E\bigg[\frac{e^{\ell(k)+g_k}}{e^{\ell(1)+g_1}+\ldots+e^{\ell(d)+g_d}}\bigg].
\end{equation*}
The proof is complete. 

\section*{Acknowledgments}
We would like to thank Felix Draxler and Victor Souza for helpful discussions. Also, we would like to thank Xuyang Chen for pointing out an inaccuracy in our first version. PA and RV are supported by NSF and Simons Research Collaborations on the Mathematical and Scientific Foundations of Deep Learning. RV is also supported by NSF Grant DMS 1954233.


\begin{thebibliography}{10}

\bibitem{aaronson2023watermarking}
Scott Aaronson and H~Kirchner.
\newblock Watermarking of large language models.
\newblock In {\em Large Language Models and Transformers Workshop at Simons Institute for the Theory of Computing}, 2023.

\bibitem{abdelnabi2021adversarial}
Sahar Abdelnabi and Mario Fritz.
\newblock Adversarial watermarking transformer: Towards tracing text provenance with data hiding.
\newblock In {\em 2021 IEEE Symposium on Security and Privacy (SP)}, pages 121--140. IEEE, 2021.

\bibitem{achiam2023gpt}
Josh Achiam, Steven Adler, Sandhini Agarwal, Lama Ahmad, Ilge Akkaya, Florencia~Leoni Aleman, Diogo Almeida, Janko Altenschmidt, Sam Altman, Shyamal Anadkat, et~al.
\newblock Gpt-4 technical report.
\newblock {\em arXiv preprint arXiv:2303.08774}, 2023.

\bibitem{block2025gaussmark}
Adam Block, Ayush Sekhari, and Alexander Rakhlin.
\newblock Gaussmark: A practical approach for structural watermarking of language models.
\newblock {\em arXiv preprint arXiv:2501.13941}, 2025.

\bibitem{brennan2020reducibility}
Matthew Brennan and Guy Bresler.
\newblock Reducibility and statistical-computational gaps from secret leakage.
\newblock In {\em Conference on Learning Theory}, pages 648--847. PMLR, 2020.

\bibitem{christ2024provably}
Miranda Christ, Sam Gunn, Tal Malkin, and Mariana Raykova.
\newblock Provably robust watermarks for open-source language models.
\newblock {\em arXiv preprint arXiv:2410.18861}, 2024.

\bibitem{christ2024undetectable}
Miranda Christ, Sam Gunn, and Or~Zamir.
\newblock Undetectable watermarks for language models.
\newblock In {\em The Thirty Seventh Annual Conference on Learning Theory}, pages 1125--1139. PMLR, 2024.

\bibitem{fowler2023we}
Geoffrey~A Fowler.
\newblock We tested a new chatgpt-detector for teachers. it flagged an innocent student.
\newblock {\em The Washington Post}, 3, 2023.

\bibitem{frieze2015introduction}
Alan Frieze and Micha{\l} Karo{\'n}ski.
\newblock {\em Introduction to random graphs}.
\newblock Cambridge University Press, 2015.

\bibitem{huber2011robust}
Peter~J Huber and Elvezio~M Ronchetti.
\newblock {\em Robust statistics}.
\newblock John Wiley \& Sons, 2011.

\bibitem{kirchenbauer2023watermark}
John Kirchenbauer, Jonas Geiping, Yuxin Wen, Jonathan Katz, Ian Miers, and Tom Goldstein.
\newblock A watermark for large language models.
\newblock In {\em International Conference on Machine Learning}, pages 17061--17084. PMLR, 2023.

\bibitem{kuditipudi2023robust}
Rohith Kuditipudi, John Thickstun, Tatsunori Hashimoto, and Percy Liang.
\newblock Robust distortion-free watermarks for language models.
\newblock {\em arXiv preprint arXiv:2307.15593}, 2023.

\bibitem{kunisky2019notes}
Dmitriy Kunisky, Alexander~S Wein, and Afonso~S Bandeira.
\newblock Notes on computational hardness of hypothesis testing: Predictions using the low-degree likelihood ratio.
\newblock In {\em ISAAC Congress (International Society for Analysis, its Applications and Computation)}, pages 1--50. Springer, 2019.

\bibitem{li2025statistical}
Xiang Li, Feng Ruan, Huiyuan Wang, Qi~Long, and Weijie~J Su.
\newblock A statistical framework of watermarks for large language models: Pivot, detection efficiency and optimal rules.
\newblock {\em The Annals of Statistics}, 53(1):322--351, 2025.

\bibitem{milano2023large}
Silvia Milano, Joshua~A McGrane, and Sabina Leonelli.
\newblock Large language models challenge the future of higher education.
\newblock {\em Nature Machine Intelligence}, 5(4):333--334, 2023.

\bibitem{munyer2023deeptextmark}
Travis Munyer and Xin Zhong.
\newblock Deeptextmark: Deep learning based text watermarking for detection of large language model generated text.
\newblock {\em arXiv e-prints}, pages arXiv--2305, 2023.

\bibitem{radford2023robust}
Alec Radford, Jong~Wook Kim, Tao Xu, Greg Brockman, Christine McLeavey, and Ilya Sutskever.
\newblock Robust speech recognition via large-scale weak supervision.
\newblock In {\em International conference on machine learning}, pages 28492--28518. PMLR, 2023.

\bibitem{shumailov2023curse}
Ilia Shumailov, Zakhar Shumaylov, Yiren Zhao, Yarin Gal, Nicolas Papernot, and Ross Anderson.
\newblock The curse of recursion: Training on generated data makes models forget.
\newblock {\em arXiv preprint arXiv:2305.17493}, 2023.

\bibitem{starbird2019disinformation}
Kate Starbird.
\newblock Disinformation's spread: bots, trolls and all of us.
\newblock {\em Nature}, 571(7766):449--450, 2019.

\bibitem{stokel2022ai}
Chris Stokel-Walker.
\newblock Ai bot chatgpt writes smart essays-should professors worry?
\newblock {\em Nature}, 2022.

\bibitem{touvron2023open}
Hugo Touvron, Thibaut Lavril, Gautier Izacard, Xavier Martinet, Marie-Anne Lachaux, Timoth{\'e}e Lacroix, Baptiste Rozi{\`e}re, Naman Goyal, Eric Hambro, Faisal Azhar, et~al.
\newblock Open and efficient foundation language models.
\newblock {\em Preprint at arXiv. https://doi. org/10.48550/arXiv}, 2302(3), 2023.

\bibitem{vershynin2018high}
Roman Vershynin.
\newblock {\em High-dimensional probability: An introduction with applications in data science}, volume~47.
\newblock Cambridge university press, 2018.

\bibitem{wainwright2019high}
Martin~J Wainwright.
\newblock {\em High-dimensional statistics: A non-asymptotic viewpoint}, volume~48.
\newblock Cambridge university press, 2019.

\bibitem{yoo2023robust}
KiYoon Yoo, Wonhyuk Ahn, Jiho Jang, and Nojun Kwak.
\newblock Robust natural language watermarking through invariant features.
\newblock {\em arXiv preprint arXiv:2305.01904}, 4, 2023.

\bibitem{zhang2022opt}
Susan Zhang, Stephen Roller, Naman Goyal, Mikel Artetxe, Moya Chen, Shuohui Chen, Christopher Dewan, Mona Diab, Xian Li, Xi~Victoria Lin, et~al.
\newblock Opt: Open pre-trained transformer language models.
\newblock {\em arXiv preprint arXiv:2205.01068}, 2022.

\end{thebibliography}
\end{document}